%% file: paper.tex
\documentclass[a4paper,11pt]{article}
\usepackage{geometry}
\usepackage{hyperref}
\usepackage[svgnames]{xcolor}
\hypersetup{colorlinks={true},urlcolor={blue},linkcolor={DarkBlue},citecolor=[named]{DarkGreen}}
\usepackage[authoryear,square]{natbib}

\usepackage{xspace}
\usepackage{url}
\usepackage[small]{caption}
\usepackage{graphicx}
\usepackage{subcaption}
\usepackage{amsmath}
\usepackage{amsthm}
\usepackage{booktabs}
\usepackage[most]{tcolorbox}

\usepackage{tikz}  
\usetikzlibrary{arrows}
\usetikzlibrary{patterns,snakes}
\usetikzlibrary{decorations.shapes}
\tikzstyle{overbrace text style}=[font=\tiny, above, pos=.5, yshift=5pt]
\tikzstyle{overbrace style}=[decorate,decoration={brace,raise=5pt,amplitude=3pt}]
\usetikzlibrary{shapes.geometric}

\usepackage[linesnumbered,ruled,vlined]{algorithm2e}
\RequirePackage{comment}
\usepackage{authblk}
\usepackage{mathtools}
\RequirePackage{comment}
\usepackage{xifthen}
\usepackage{arydshln}
\usepackage[tt=false]{libertine}
\usepackage[capitalise]{cleveref} 
\usepackage{amssymb}
\usepackage{dsfont}
\usepackage{colortbl}
\usepackage{graphics}
\usepackage{float}
\usepackage{color}
\usepackage{array}
\usepackage{multirow}
\usepackage[shortlabels]{enumitem}
\usepackage{amsmath,latexsym,color,amsthm,authblk}

\newcommand{\VV}{\ensuremath{\mathcal V}\xspace}

\newtheorem{theorem}{\bf Theorem}
\newtheorem{lemma}{\bf Lemma}

\newtheorem{definition}{\bf Definition}

\crefname{theorem}{theorem}{\bf Theorem}
\crefname{example}{example}{\bf Example}
\crefname{observation}{observation}{\bf Observation}
\crefname{lemma}{lemma}{\bf Lemma}
\crefname{corollary}{corollary}{\bf Corollary}
\crefname{proposition}{proposition}{\bf Proposition}
\crefname{definition}{definition}{\bf Definition}
\crefname{claim}{claim}{\bf Claim}
\crefname{reductionrule}{reduction rule}{\bf Reduction rule}
\crefname{assumption}{assumption}{\bf Assumption}

\DeclareMathOperator*{\argminaa}{arg\,min}

\newcommand{\instance}{\ensuremath{I}\xspace}

\newcommand{\lb}{\left(}
\newcommand{\rb}{\right)}
\DeclarePairedDelimiter\ceil{\lceil}{\rceil}

\newcommand{\bud}{\ensuremath{B}\xspace}
\newcommand{\proj}{\ensuremath{\mathcal{T}}\xspace}
\newcommand{\voters}{\ensuremath{N}\xspace}

\newcommand{\approf}{\ensuremath{\mathcal{A}}\xspace}

\newcommand{\cof}[1]{\ensuremath{\operatorname{cost \hspace{0.5mm}\!}\left(#1\right)}\xspace}
\newcommand{\co}{\ensuremath{\operatorname{cost \hspace{0.5mm}\!}}\xspace}
\newcommand{\scoreof}[1]{
    \ifthenelse{\isempty{#1}}
        {\ensuremath{score(p)}\xspace}
        {\ensuremath{score(#1)}\xspace}
}
\newcommand{\aof}[1]{
    \ifthenelse{\isempty{#1}}
        {\ensuremath{A(i)}\xspace} 
        {\ensuremath{A(#1)}\xspace} 
}
\newcommand{\leftof}[1]{
    \ifthenelse{\isempty{#1}}
        {\ensuremath{b_i}\xspace} 
        {\ensuremath{b_{(#1)}}\xspace} 
}
\newcommand{\affordable}{\ensuremath{\rho}\xspace}
\newcommand{\utof}[2]{
    \ifthenelse{\isempty{#1}}
    {
        \ifthenelse{\isempty{#2}}
            {\ensuremath{u_i\left(S\right)}\xspace}
            {\ensuremath{u_i\left(#2\right)}\xspace}
    }
    {
        \ifthenelse{\isempty{#2}}
            {\ensuremath{u_{#1}\left(S\right)}\xspace}
            {\ensuremath{u_{#1}\left(#2\right)}\xspace}
    }
}
\newcommand{\cardof}[1]{
    \ifthenelse{\isempty{#1}}
        {\ensuremath{|S|}\xspace}
        {\ensuremath{|#1|}\xspace}
}
\newcommand{\appof}[2]{
    \ifthenelse{\isempty{#1}}
    {
        \ifthenelse{\isempty{#2}}
            {\ensuremath{A_i(j)}\xspace}
            {\ensuremath{A_i(#2)}\xspace}
    }
    {
        \ifthenelse{\isempty{#2}}
            {\ensuremath{A_{#1}(j)}\xspace}
            {\ensuremath{A_{#1}(#2)}\xspace}
    }
}
\newcommand{\priceof}[2]{
    \ifthenelse{\isempty{#1}}
    {
        \ifthenelse{\isempty{#2}}
            {\ensuremath{p_i(p)}\xspace}
            {\ensuremath{p_i(#2)}\xspace}
    }
    {
        \ifthenelse{\isempty{#2}}
            {\ensuremath{p_{#1}(p)}\xspace}
            {\ensuremath{p_{#1}(#2)}\xspace}
    }
}

\newcommand{\pbrule}{\ensuremath{f}\xspace}

\newcommand{\winnerfunction}{\ensuremath{\Phi}\xspace}
\newcommand{\ruleof}[2]{%
	\ifthenelse{\isempty{#2}}%
	{\ensuremath{\operatorname{#1\!}\left(\approf\right)}\xspace}
	{\ensuremath{\operatorname{#1\!}\left(#2\right)}\xspace}
}
\newcommand{\degreeof}[2]{
    \ifthenelse{\isempty{#1}}
    {
        \ifthenelse{\isempty{#2}}
            {\ensuremath{d_{\pbrule}\left(l\right)}\xspace}
            {\ensuremath{d_{\pbrule}\left(#2\right)}\xspace}
    }
    {
        \ifthenelse{\isempty{#2}}
            {\ensuremath{d_{#1}\left(l\right)}\xspace}
            {\ensuremath{d_{#1}\left(#2\right)}\xspace}
    }
}
\newcommand{\winners}[2]{%
  \ifthenelse{\isempty{#2}}%
    {\ensuremath{\operatorname{\winnerfunction\!}\left(#1,\instance\right)}\xspace}
    {\ensuremath{\operatorname{\winnerfunction\!}\left(#1,\instance#2\right)}\xspace}
}

\makeatletter
\newcommand\mathcircled[1]{%
  \mathpalette\@mathcircled{#1}%
}
\newcommand\@mathcircled[2]{%
  \tikz[baseline=(math.base)] \node[draw,circle,inner sep=1pt] (math) {$\m@th#1#2$};%
}
\makeatother
\makeatletter
\newcommand{\thickhline}{%
    \noalign {\ifnum 0=`}\fi \hrule height 1.3pt
    \futurelet \reserved@a \@xhline
}
\newcommand{\toothickhline}{%
    \noalign {\ifnum 0=`}\fi \hrule height 2pt
    \futurelet \reserved@a \@xhline
}

\newcommand{\mes}{\textsc{mes}\xspace}

\makeatletter
\newcommand*{\inlineequation}[2][]{%
  \begingroup
    \refstepcounter{equation}%
    \ifx\\#1\\%
    \else
      \label{#1}%
    \fi
    \relpenalty=10000 %
    \binoppenalty=10000 %
    \ensuremath{%
      #2%
    }%
    ~\@eqnnum
  \endgroup
}
\makeatother

\newcolumntype{"}{@{\hskip\tabcolsep\vrule width 1.3pt\hskip\tabcolsep}}
\newcolumntype{x}{@{\hskip\tabcolsep\vrule width 2pt\hskip\tabcolsep}}
\makeatother

\makeatletter
\newcommand*{\rom}[1]{\expandafter\@slowromancap\romannumeral #1@}
\makeatother

\title{\bf Proportionality Degree in Participatory Budgeting}
\author[1]{Aris Filos-Ratsikas}
\author[2]{Sreedurga Gogulapati}
\author[1]{Georgios Kalantzis}

\affil[1]{University of Edinburgh, UK}
\affil[2]{Indian Institute of Technology Hyderabad, India}
\date{}

\begin{document}
\maketitle

\begin{abstract}
We initiate the study of the proportionality degree for participatory budgeting, with a particular focus on two popular methods: the Method of Equal Shares (\mes) and Phragm\'{e}n's Sequential Rule. Among other results, we derive tight bounds (up to small constant factors) on the proportionality degree of these two rules, which showcase that, despite \mes satisfying stronger axiomatic guarantees, the two rules have the same proportionality degree from a quantitative perspective. We complement our theoretical findings with an extensive experimental evaluation on real-world participatory budgeting datasets, the results of which closely mirror those of our developed theory. Our experiments also provide more insights into the comparisons between the rules.
\end{abstract}
\section{Introduction}
Participatory budgeting (PB) is a democratic process through which citizens directly influence the allocation of public funds. First implemented in Porto Alegre in 1989, PB has since been adopted globally at various scales, from neighborhoods and cities to schools and universities, as a mechanism to enhance civic engagement, improve transparency, and promote more equitable outcomes in public spending \citep{de2022international,wampler2021participatory}. At its core, PB enables participants to propose, deliberate upon, and select projects proposals for funding within a fixed budget constraint, thereby embedding elements of direct democracy into budgetary decision-making.

The computational aspects of PB have received increasing attention in recent years, motivated by the need to design fair and efficient methods for aggregating citizen preferences into concrete budget allocations \citep{aziz2021participatory}. A significant line of work in this area adopts an axiomatic approach, where desirable properties of fairness and representation are formalized as axioms. In particular, axioms inspired by approval-based multi-winner voting such as Justified Representation (JR), Proportional Justified Representation (PJR), and Extended Justified Representation (EJR) \citep{aziz2017justified, sanchez2017proportional}, have been adapted to the PB setting to capture the normative idea that cohesive groups of voters should be entitled to a share of the budget roughly proportional to their size \citep{peters2021proportional}. These axioms offer qualitative guarantees and serve as important design goals for PB rules. However, axiomatic properties are inherently binary: a rule either satisfies a given axiom or it does not.

To complement the axiomatic approach, the notion of the \textit{proportionality degree} has been introduced as a quantitative measure of how well a participatory budgeting rule achieves proportional representation \citep{aziz2018complexity, skowron2021proportionality}. The proportionality degree captures the extent to which a cohesive group of voters can secure a share of the budget in proportion to its size.
This notion has been extensively studied in the context of multi-winner voting, where the proportionality degree of prominent rules such as Phragmén’s rule \citep{Phragmen1894methode} and the Proportional Approval Voting (PAV) rule \citep{thiele1895om} (see also \citep{kilgour2010approval}) is well understood. However, despite the growing body of work on PB, the proportionality degree of PB rules has remained unexplored.

This gap motivates our study: we aim to initiate a systematic examination of the proportionality degree of all the PB rules known to satisfy proportionality-related axioms. We particularly focus on the well known \textit{Method of Equal Shares} \citep{peters2021proportional} and the \textit{ Phragm\'{e}n} sequential \citep{Phragmen1894methode} (see also \citep{los2022proportional}) rules. Both rules share a market-based intuition and have been recognized for their fairness-oriented properties in the literature. 

\subsection{Our Contribution}
We derive tight lower and upper bounds (up to small constant factors) on the proportionality degree of these two rules. \mes is known to satisfy strong axiomatic guarantees such as Extended Justified Representation (EJR), whereas Phragmén does not. However, somewhat surprisingly, both rules have the same proportionality degree guarantees from a quantitative perspective. This finding highlights a deeper structural similarity between these methods.

To complement our theoretical analysis, we also conduct an extensive experimental evaluation on real-world PB datasets covering 100 elections. The experiments include studying the proportionality degree of the \mes and Phragm\'{e}n rules, both with and without exhaustion. Additionally, we also compare their performance with a rule that greedily selects projects with respect to their approval score, which is widely used in practice. Our empirical results show that, with respect to proportionality degree, both the \mes and Phragmén rules with exhaustion exhibit comparable performance in practice (with the latter performing slightly better than the former). Moreover, all the above rules clearly outperform the greedy rule.

\subsection{Related Work}
The Participatory Budgeting literature focuses broadly on two objectives: welfare \citep{talmon2019framework,laruelle2021voting,sreedurga2022maxmin} and fairness.
The goal of fairness in PB is to ensure that the satisfaction of voters meets a certain threshold. There are various notions of fairness such as core-based notions \citep{fain2016core,fain2018fair,peters2021proportional,maly2023core}, share-based notions \citep{bogomolnaia2005collective,duddy2015fair,aziz2018rank,aziz2019fair,sreedurga2022indivisible,sreedurga2023individual,airiau2023portioning,sreedurga2024hybrid}, and proportionality-based notions \citep{aziz2018proportionally,skowron2020participatory,peters2021proportional,aziz2021proportionally,fairstein2021proportional,los2022proportional,fairstein2022welfare,peters2020proportionality}. Note that, proportionality is the most widely studied fairness concept among the three, especially due to its computational and cognitive simplicity. The key idea behind these proportionality notions is that, if a group of voters unanimously supports a set of projects, the group is guaranteed a satisfaction proportional to its size. Such groups of voters are said to be cohesive.

\cite{aziz2018proportionally} first studied the concept of proportionality in PB, inspired by the similar concept in multi-winner voting (MWV), a special case of PB in which every project has unit cost and the budget allows $k$ projects to be funded \citep{lackner2023multi}. Consequently, much of the PB research on proportionality has drawn extensively on the MWV literature. Classical proportional MWV rules such as the Sequential Phragmén rule \citep{Phragmen1894methode,brill2024Phragmen} and the Method of Equal Shares (MES) \citep{peters2020proportionality} have been extended to PB, respectively by \cite{los2022proportional} and \cite{peters2021proportional}. The well known Proportional Approval Voting (PAV) rule \citep{thiele1895om} in the MWV literature, however, is proved to not satisfy proportionality guarantees in PB \citep{peters2021proportional,los2022proportional}.
\mes is polynomial-time computable and satisfies Extended Justified Representation (EJR) \citep{peters2021proportional}, whereas Phragmén satisfies Proportional Justified Representation (PJR) but fails EJR \citep{los2022proportional}. MES has also been extended to settings with additive utilities; comparable extensions for Phragmén remain open.

In MWV, the proportionality degree offers a fine-grained yardstick for how closely a rule approximates an ideal proportionality. Tight bounds have been established for PAV, Phragmén, and related rules \citep{skowron2021proportionality,aziz2018complexity}, with PAV and its local-search variants \citep{aziz2018complexity,kraiczy2024lower} achieving the optimal degree. In addition, in the case of multiwinner voting a general bound is known for all rules which satisfy EJR \cite{sanchez2017proportional}, which coincides with the known proportionality degree of Phragmén \cite{skowron2021proportionality}. Thus, as \mes is a rule which satisfies EJR in approval-based multiwinner voting, we know that it achieves proportionality degree at least as large as that of Phragmén, but an upper bound is unknown.

By contrast, no prior work has analyzed the proportionality degree for PB-specific rules. Our paper addresses this question by deriving near-tight bounds for \mes and Phragmén in the PB setting and showing that their proporionality degree guarantees coincide, despite the rules differing in their axiomatic properties (namely \mes's compatibility with EJR versus Phragmén’s incompatibility).

This work assumes that the preferences over the projects are expressed as approval votes. This is indeed the most popular preference elicitation method in practice \citep{benade2018efficiency} and hence is also widely studied in the literature \citep{aziz2018proportionally,talmon2019framework,fluschnik2019fair,goel2019knapsack,baumeister2020irresolute,rey2020designing,jain2020participatoryg,freeman2021truthful,fairstein2021proportional,benade2021preference,laruelle2021voting,los2022proportional,brill2023proportionality}.

\section{Preliminaries}
An approval-based \textit{participatory budgeting (PB)} instance is a tuple $\instance = (N, \proj, B, C, \approf)$, where:
\begin{itemize}[leftmargin=*]
  \item[-] $N$ is a set of $n$ \emph{voters}, and $\proj$ is a set of $m$ \textit{projects}.
  \item[-] $\bud$ is the available \textit{budget}. 
  \item[-] $C$ is the \emph{cost profile}, namely $\co : \proj \rightarrow \mathbb{R}_{> 0}$ is a function that associates a cost with each $t \in \proj$. For each subset of projects $T \subseteq \proj$, we define $\cof{T} = \sum_{t \in T} \cof{t}$ for the total cost of $T$. We will use $\mathcal{C}$ to refer to the collection of all possible cost profiles. 
  \item[-] \approf is the \emph{approval profile}, namely a collection of subsets of projects $A_i \subseteq \proj$ which denote the projects that voter $i \in N$ approves.
\end{itemize}
A subset of projects $W \in \proj$ is feasible if $\cof{W} \leq B$. A participatory budgeting rule (PB rule) $\pbrule$ takes as input an instance of the above form and returns a feasible subset of projects $W = \pbrule(\instance)$, which we refer to as the \emph{proposal} of rule $\pbrule$.

To define the appropriate notions of proportionality, in multiwinner voting the literature has employed the concept of an $\ell$-large group. In the case of PB however, a notion that is based only on the number of projects is inadequate, since the ``proportional rights'' of each project's group of supporters should also depend on the project's cost. For this reason, the following notion of \emph{$T$-cohesiveness} has been defined for PB, see \citep{peters2021proportional,los2022proportional}.

\begin{definition}[$T$-Cohesive Group \citep{peters2021proportional,los2022proportional}]\label{def: tcohesive}
   Given a PB instance \instance and a subset $T \subseteq \proj$ of projects, we say that a group of voters $V \subseteq \voters$ is $T$-cohesive if and only if 
   \[T \subseteq \bigcap\limits_{i \in V}{A_i} \ \ \text{ and } \ \ \frac{\cof{T}}{\bud} \leq \frac{|V|}{n}.\]
\end{definition}

We next define the notion of the \emph{average satisfaction} of a set of voters $V$. 
\begin{definition}[Average Satisfaction]
Given a PB instance \instance, a subset of projects $W$ and a subset of voters $V$, the average satisfaction of $V$ is defined as 
\[
\emph{\text{avg}}_W(V) = \frac{1}{|V|} \sum_{i \in V} |W \cap A_i|.
\]
\end{definition}

We are now ready to define the proportionality degree.
\begin{definition}[Proportionality Degree]
    Given a PB instance \instance, the proportionality degree of a PB rule \pbrule for a subset $T \in \proj$, denoted by $d_{\pbrule}(T)$, is defined as follows:
\[
        d_{\pbrule}(T) = \sup\left\{g(T): \min_{V \in \VV(T)} \emph{\text{avg}}_{f(I)}(V) \geq \min(|T|, g(T)) \right\},     
\]
where $\VV(T)$ denotes the set of all $T$-cohesive groups of voters.
\end{definition}

The proportionality degree of a PB rule \pbrule for a given subset $T \in \proj$ of projects intuitively denotes the worst-case bound on the average satisfaction of $T$-cohesive groups of voters.
In particular, the proportionality degree of a rule for a set $T$ represents the largest worst-case lower bound on the average satisfaction of any $T$-cohesive group of voters.

Finally, we refer to a well-known axiom of proportionality, namely the \emph{Extended Justified Representation (EJR)}. EJR was first proposed for approval-based committee elections by \cite{aziz2017justified} and extended to the participatory budgeting model by \cite{peters2021proportional}. 

\begin{definition}[Extended Justified Representation (EJR) for (approval-based) PB \citep{peters2021proportional}]\label{def: EJR-PB}
    A rule $\pbrule$ satisfies EJR if for each participatory budgeting  instance $I$, each $T \subseteq \proj$, and each $T$-cohesive group $V \subseteq N$ of voters there exists a voter $i \in V$ such that $|A_i \cap \pbrule(I)| \geq |T|$.
\end{definition}

\section{Proportionality Degree of the Method of Equal Shares}
First, we provide the definition of the {\sc Method of Equal Shares} (\mes) for approval based profiles in participatory budgeting. It is worth mentioning that \mes is one of the most widely considered proportional PB rules in theory and practice\footnote{See \url{equalshares.net}.}. It is the only known polynomial-time PB rule that satisfies EJR for approval preferences. It also satisfies additional proportionality notions such as PJR-X and Local-BPJR-L, see  \citep{aziz2018proportionally,brill2023proportionality} for a discussion of those notions.
\begin{tcolorbox}[
    standard jigsaw,
    opacityback=0,  
]
    \begin{definition}[{\sc Method of Equal Shares }\citep{peters2021proportional}]\label{def: mes}
     Each voter is initially given an equal fraction of the budget, i.e. each voter is given $\bud/n$ monetary units. We start with an empty proposal $W = \emptyset$ and sequentially add projects to $W$, one in each round. To add a project $p$ to $W$, we need the voters to pay for $t$. \medskip 
     
     For each selected project $t \in W$ let $\priceof{i}{t}$ be the amount that voter $i$ pays for $t$; obviously, we require $\sum_{i \in N} \priceof{i}{t} = \cof{t}$, to cover the cost of the project. Let $\priceof{i}{W} = \sum_{t \in W} \priceof{i}{t} \leq \bud/n$ be the amount that voter $i$ has paid so far for the projects selected in $W$, and let $\leftof{} = \bud/n - \priceof{}{W}$ be the amount that voter $i$ has left in their allocated budget (clearly, before the end of the first round $\leftof{} = \bud/n$ for all $i \in N$). \medskip 
     
     For $\affordable \geq 0$, we say that a project $t \in \mathcal{T}\setminus W$ is \emph{$\affordable$-affordable} if   
     \[
     \sum_{i \in N_t}\min(\leftof{}, \affordable) = \cof{t}
     \]
     where $N_t \subseteq N$ denotes the set of voters that approve project $t$, i.e., $N_t =\{i \in N: t \in A_i\}$. If no project is $\affordable$-affordable for any \affordable, \mes terminates and returns $W$. Otherwise, it selects a project $t \in \mathcal{T}\setminus W$ that is \affordable-affordable for a minimum \affordable. Individual payments are given by
     \[
     \priceof{i}{t} = \min(\leftof{}, \affordable)
     \]
     In a nutshell, at each step, a project $t \in \mathcal{T}\setminus W$ is \affordable-affordable if and only if the cost of $t$ can be covered by the voters approving $t$ in such a way that the maximum payment of any voter is \affordable. Voters who have less than a \affordable amount left spend all of their allocated budget, and the other voters pay exactly \affordable.
\end{definition}
\end{tcolorbox}

We are now ready to proceed with the guarantee for the proportionality degree of \mes. To aid our proof, we will first develop a useful lemma which upper bounds the maximum payment that can be asked from some voter in a cohesive group during the execution of the rule.
\begin{lemma}\label{lemma: maximum-payment}
    Let $T \subseteq \proj$, and let $V \subseteq N$ be a $T$-cohesive group of voters. Consider a sequence of rounds of \mes $\pi$ such that for each round $j \in \pi$, there exists some voter $i \in V$ that approves the project selected in that round, and the set of projects that are still affordable by the voters in $V$ is not empty. Then, there exists an ordering $R =\{0, \ldots, n_V-1\}$ of the voters in $V$ such that the payment of voter $i \in R$ in any of the rounds in $\pi$ is at most $\frac{\max_{t \in T}\cof{t}}{n_V-i}$. 
\end{lemma}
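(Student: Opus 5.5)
The plan is to order the voters of $V$ by how much money they have left, and then use the fact that \mes always picks the cheapest option in terms of maximum payment.

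Throughout, write $c^* = \max_{t\in T}\cof{t}$. The ordering is defined once, on the budgets $\leftof{}$ at the start of the rounds in $\pi$ (equivalently, by total spending, with ties broken arbitrarily). Voter $0$ is the one with the smallest remaining budget and voter $n_V-1$ the one with the largest. With this ordering, voter $i$ is the $(i+1)$-th poorest, so at least $n_V-i$ voters of $V$ have at least as much money left as voter $i$.

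Fix a round $j\in\pi$ and let $\rho_j$ be the value at which the selected project is affordable. Every voter pays $\min(\leftof{},\rho_j)\le \rho_j$. So it suffices to show that a voter who pays the full amount $\rho_j$ satisfies $\rho_j\le c^*/(n_V-i)$; a voter who pays less than $\rho_j$ pays everything they have left.

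The key step compares $\rho_j$ with a project of $T$. By hypothesis, some $t\in T$ is still affordable by $V$ in round $j$. Since all of $V$ approves $t$, we have
\[
\sum_{k\in V}\min(\leftof{(k)},\rho)\ \ge\ \cof{t}
\]
for some finite $\rho$. Let $\rho_t$ be the smallest such $\rho$. Restricting to the $n_V-i$ voters richest relative to $i$, if every one of them has at least $c^*/(n_V-i)$ left, then $\rho_t\le \cof{t}/(n_V-i)\le c^*/(n_V-i)$. By minimality of the selected $\rho_j$, we get $\rho_j\le\rho_t$, which gives the bound.

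If instead some of those voters have less than $c^*/(n_V-i)$ left, then voter $i$, being poorer than all of them, also has less than $c^*/(n_V-i)$ left. In that case voter $i$'s payment is at most $\leftof{}$, which is again below the bound. The same dichotomy applies to voter $i$'s own payment.

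The main obstacle is that the ordering must be fixed across all rounds of $\pi$, while remaining budgets change from round to round. To handle this I would use that \mes payments are monotone: in each round every payer pays $\min(\leftof{},\rho_j)$. Voters with more money therefore pay weakly more, but never end up with less than a poorer voter. Hence the relative order of remaining budgets among voters of $V$ is preserved over time, possibly after re-breaking ties consistently. I would verify this invariant by induction over the rounds of $\pi$ and then apply the per-round argument above.

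A secondary subtlety is that a project of $T$ may be affordable by $V$ only with help from voters outside $V$. This only decreases $\rho_t$ compared with the computation restricted to $V$, so the inequality still holds.
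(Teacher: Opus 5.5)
\textbf{There is a genuine gap in fixing one ordering for all rounds.} Your per-round analysis is correct, and cleaner than the paper's. Suppose that at the start of a round of $\pi$ the voter of $V$ that is \emph{currently} the $(i+1)$-th poorest holds at least $c^*/(n_V-i)$. Then the $n_V-i$ voters at or above it give $\sum_{k\in N_t}\min(b_k,c^*/(n_V-i))\ge c^*\ge \cof{t}$ for an unselected $t\in T$, so $\rho_j\le c^*/(n_V-i)$. Otherwise that voter cannot pay more than it holds.

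The problem is the invariant that the budget order inside $V$ is preserved across rounds. It is false. In a round only the approvers of the selected project pay, and voters of $V$ approve different projects outside $T$. So a richer voter who approves the selected project can drop below a poorer one who does not.

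This is not just a gap in the proof; your ordering itself can violate the bound. Take $V=\{u,v\}$ and $T=\{t_1,t_2\}$ with unit costs, approved only by $u$ and $v$, and $B/n=1$. Then $V$ is $T$-cohesive, $c^*=1$, and the bounds are $\tfrac{1}{2}$ for position $0$ and $1$ for position $1$.
\begin{enumerate}
\item When $\pi$ starts, $u$ holds $0.99$ (say it paid $0.01$ earlier for a cheap project) and $v$ holds $1$, so you place $u$ first.
\item While both hold at least $\tfrac{1}{2}$, $t_1$ is $\tfrac{1}{2}$-affordable. So \mes may select two projects, each shared by $v$ only with full-budget outsiders, at $\rho=0.3$.
\item Now $v$ holds $0.4$ and $t_1$ needs $\rho=0.6$. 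So \mes may next select a project that $u$ shares only with outsiders at $\rho=0.55$.
\item Then $u$ pays $0.55>\tfrac{1}{2}$.
\end{enumerate}
The order was reversed after step 2, and no re-breaking of ties helps. Putting $v$ first would have worked.

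\textbf{How to repair it.} Keep your comparison with $\rho_t$, and use only the fact that each individual budget is non-increasing, not an ordering invariant. Let $P_k$ be the largest single-round payment of $k\in V$ during $\pi$, and order $V$ by increasing $P_k$.

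Suppose the voter in position $i$ had $P>c^*/(n_V-i)$. Then the $m=n_V-i$ voters in positions $i,\dots,n_V-1$ each pay more than $c^*/m$ in some round. Look at the earliest such round. Since budgets only decrease, all $m$ of them still hold more than $c^*/m$ at its start. Hence an unselected $t\in T$ is affordable at level $c^*/m$, which forces $\rho\le c^*/m$ in that round. This is a contradiction.

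\textbf{Comparison with the paper.} The paper fixes the order by the round in which a voter first pays for a project outside $T$, and then treats such voters as exhausted ``in the worst case''. That assumption plays the same unjustified role as your invariant: a voter may pay little for its first project outside $T$ and a lot later. The max-payment ordering removes the issue in both arguments.
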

\begin{proof}
    To prove the lemma, we will construct an ordering by considering any round $\pi_j$ in $\pi = \{\pi_0, \ldots, \pi_h\}$ for some $h \in \mathbb{N}$. In particular, we will start with an arbitrary ordering and we will refine it in each round, making sure that it is consistent with the orderings of the previous rounds.

    Let $\pi_{\ell_1}$ be the first round in which a project $t \in \proj \setminus T$ is selected by \mes, and consider any round $\pi_j \in \{\pi_0,\ldots,\pi_{\ell_{1}-1}\}$. Let $t_j \in T$ be the $\rho$-affordable project selected by \mes in round $\pi_j$. The voters in $V$ have not paid for any project not in $T$ yet and by statement are a $T$-cohesive group, so by \cref{def: tcohesive}
    \[
    \rho \le \frac{\cof{t_j}}{n_V} \leq \frac{\max_{t\in T}\cof{t)}}{n_V} \leq \frac{\max_{t \in T}\cof{t}}{n_V-i}
    \]
    From this we obtain that up until round $\pi_{\ell_1}$, $R$ can be \emph{any} arbitrary ordering of the voters in $V$.
    
    Now consider round $\pi_{\ell_1}$, i.e., the first round that some project $t_{\ell_1}$ in $\proj \setminus T$ was selected by \mes. Let $V_{\ell_1}$ be the set of voters that were charged some positive payment from \mes for $t_{\ell_1}$, and observe that $|V_{\ell_1}| \geq 1$, since at least one voter in $V_{\ell_1}$ approves the project $t_{\ell_1}$, by the statement of the lemma. By the definition of \mes, we know that the project $t_{\ell_1}$ was $\affordable$-affordable for the smallest possible $\affordable$. This implies that $\affordable \leq \frac{\max_{t \in T}\cof{t}}{n_V}$, since there are still projects from $T$ that have not been selected: any such project $t$ is $(\cof{t}/n_V$)-affordable, since $V$ is $T$-cohesive, and all of the voters in $T$ have only paid for projects in $T$. The payment of each voter in $V_{\ell_1}$ in this round is at most $\rho$, and hence their payment is at most $\frac{\max_{t \in T}\cof{t}}{n_V-i}$. Now we will bound the payment of the voters in $V\setminus V_{\ell_1}$ for the next round. For those voters, the payment for projects in $T$ is divided equally among them, considering \cref{def: tcohesive} and that so far they have been charged solely for projects in $T$. Some voters from $V_{\ell_1}$ possibly contribute to the payment if they have some leftover budget. In the worst case, we may assume that the leftover budget of the voters in $V_{\ell_1}$ is $0$, and therefore the payment of each voter in $V\setminus V_{\ell_1}$ is at most $\frac{\max_{t\in T}\cof{t}}{n_V-|V_{\ell_1}|}$. Thus, if we refine $R$ to be $\{V_{\ell_1}, V\setminus V_{\ell_1}\}$ (with the voters in both $V_\ell$ and $V\setminus V_{\ell_1}$ ordered arbitrarily), we establish that any voter $i \in R$ pays at most $\frac{\max_{t \in T}\cof{t}}{n_V-i}$. 

    Inductively, we can further refine the ordering $R^{\ell_k}$ at any step $\pi_{\ell_k} \in \pi$, assuming that we had an ordering $R^{\ell_k-1}$ that satisfied the statement of the lemma for the previous step $\pi_{
    \ell_k-1}$. If $\pi_{\ell_k}$ is a step in which a project from $T$ is selected, then the order does not require any refinement and we can simply set $R^{\ell_k} = R^{\ell_k-1}$. If a project $t_{\ell_k} \in \proj \setminus T$ is selected at step $\pi_{\ell_k}$, then we work similarly to above. 
    
    Namely, let $V_{\overline{T}}$ be the set of all voters which contributed payments to a project in $\proj \setminus T$ before round $\pi_{\ell_k}$. Since we are considering the worst case, we have assumed that those voters have exhausted their budgets and therefore in round $\pi_{\ell_k}$ (and subsequent rounds), their payment will be $0$. For the voters in $V_{\ell_k}$, i.e., those that contributed payments to the selection of project $t_{\ell_k}$, their leftovers budgets are the same, and hence they are charged the same payment for project $t_{\ell_k}$. From the same argument as the one above, their payment of the voters in $V_{\ell_k}$ will be bounded by $\affordable \leq \frac{\max_{t \in T}\cof{t}}{n_V-|V_{\overline{T}}|}$, since there are still projects from $T$ that are affordable. Finally, again, assuming that the voters in $V_{\ell_k}$ exhaust their budget paying for project $t_{\ell_k}$, the payment of the remaining voters in $V\setminus \{V_k \cup V_{\overline{T}}\}$ is bounded by $\frac{\max_{t \in T}\cof{t}}{n_V-|V_{\overline{T}}|-|V_{\ell_k}|}$. Now we refine $R^{\ell_k}$ to be exactly as $R^{\ell_{k-1}}$, with the voters in $V_{\ell_k}$ following the voters in $V_{\overline{T}}$ and being followed by the voters in $V\setminus \{V_{\ell_k} \cup V_{\overline{T}}\}$, and the bound in the statement of the lemma follows. See \cref{fig:lemma-1} for an illustration.
\end{proof}
\begin{figure*}[h!]
    \centering
    \input{figure2}
    \caption{An illustration of the construction of the ordering $R$ in the proof of \Cref{lemma: maximum-payment}. The figure shows round $\pi_{\ell_1}$, i.e., the first round in which a project in $\proj \setminus T$ is selected, and then a subsequent round $\pi_{\ell_k}$, in which some other project in $\proj \setminus P$ is selected. $p(i)$ denotes the payment of voter $i$. By ``$any$'', it is meant that the ordering of the voters within the corresponding set is any arbitrary order.}
    \label{fig:lemma-1}
\end{figure*}
Next, we prove the bound on the proportionality degree of the rule.
\begin{theorem}\label{thm:mes-prop-degree}
    Let $T \subseteq \proj$. 
    The proportionality degree of the {\sc Method of Equal Shares} satisfies $d_{\mes}(T) \geq \frac{1}{2}\cdot\left(\frac{\cof{T}}{\max_{t \in T} \cof{t}} - 1 \right)$.
\end{theorem}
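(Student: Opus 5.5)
Fix $T$ and a $T$-cohesive group $V$ with $n_V=|V|$, and write $c^*=\max_{t\in T}\cof{t}$, so that $\cof{T}/c^* \ge |T|$. Since $\min(|T|,\cdot)$ appears in the definition, it suffices to show that every such $V$ satisfies $\text{avg}_W(V) \ge \frac{1}{2}(\cof{T}/c^*-1)$, where $W=\mes(I)$.

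The plan is a budget-accounting argument driven by \cref{lemma: maximum-payment}. Consider the final state of \mes. If some project of $T$ is not in $W$, then \mes stopped while that project was still unselected, so it was not affordable. Since every voter of $V$ approves it, this means the leftover budgets satisfy $\sum_{i\in V}\leftof{} < c^*$. The case $T\subseteq W$ is trivial, because then every voter of $V$ approves all of $T$ and has satisfaction $|T|$. Using $n_V \ge n\cof{T}/\bud$, the voters of $V$ together have spent more than $n_V\bud/n - c^* \ge \cof{T}-c^*$ of their budget.

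Next I would charge this spending to utility. Every payment of a voter $i\in V$ goes to a project $i$ approves, so $|W\cap A_i|$ is at least the number of projects $i$ paid for. Restrict to the rounds in which some voter of $V$ pays and some project of $T$ is still affordable to $V$; these are exactly the rounds covered by the sequence $\pi$ of \cref{lemma: maximum-payment}. The rounds after that point contribute nothing, since by then $V$ can no longer afford a $T$-project, which is precisely the stopping situation above. The lemma provides an ordering $0,\dots,n_V-1$ of $V$ in which voter $i$ pays at most $c^*/(n_V-i)$ per round. Hence voter $i$ receives at least $s_i/(c^*/(n_V-i)) = s_i(n_V-i)/c^*$ units of satisfaction, where $s_i$ is the amount $i$ spent. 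Summing gives
\[
\sum_{i\in V}|W\cap A_i| \;\ge\; \frac{1}{c^*}\sum_{i=0}^{n_V-1} s_i\,(n_V-i),
\]
subject to $s_i\le \bud/n$ and $\sum_i s_i \ge \cof{T}-c^*$.

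The final step is to minimize the right-hand side, which is an adversarial allocation problem. The adversary concentrates the spending on the voters with the smallest weights $n_V-i$, that is, the last ones in the ordering, each spending the full $\bud/n$. Let $k$ be the number of voters needed, so $k\approx (\cof{T}-c^*)n/\bud \le n_V$ roughly. The sum is then about $\frac{\bud}{n c^*}\sum_{j=1}^{k} j \approx \frac{\bud}{nc^*}\cdot\frac{k^2}{2}$. Dividing by $n_V$ and using $k\le n_V$ together with $k\,\bud/n \approx \cof{T}-c^*$ gives an average of at least $\frac{k}{2n_V}\cdot\frac{\cof{T}-c^*}{c^*}\cdot\frac{n_V}{k}$. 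To make this bound clean I would instead weight each voter by $s_i(n_V-i)\ge s_i\cdot(\text{rank from the end})$ and use $\sum_{j\le k} j \ge k^2/2$ directly. I expect the main obstacle to be this optimization: getting exactly the factor $\frac12$ and the $-1$ without losing an extra factor from the ratio $k/n_V$. I would handle it by first reducing to the tight case $n_V = n\cof{T}/\bud$, which is legitimate because enlarging $V$ only helps the averaging, and then carefully tracking floor and rounding terms, which I expect are absorbed by the $-1$.
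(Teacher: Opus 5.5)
Your proposal has a genuine gap, and it is in the final optimization, not in rounding. The relaxation you keep has its minimum strictly below the target. It consists of the per-round caps $c^*/(n_V-i)$ from \cref{lemma: maximum-payment}, $s_i\le \bud/n$, and a lower bound on $\sum_i s_i$. As you say, the adversary lets the last $k$ voters (weights $1,\dots,k$) exhaust $\bud/n$ each while the first ones spend nothing. Take the tight case $n_V=n\cof{T}/\bud$, and write $a=\cof{T}/c^*$ and $\beta=c^*n/\bud$, so that $n_V=a\beta$ and $k=(a-1)\beta$. There the relaxation certifies only
\[
\frac{\bud}{n c^* n_V}\cdot\frac{k(k+1)}{2}=\frac{(a-1)\bigl((a-1)\beta+1\bigr)}{2a\beta}=\frac{a-1}{2}-\frac{a-1}{2a}\Bigl(1-\frac{1}{\beta}\Bigr).
\]
For unit costs, $|T|=3$ and $\beta\to\infty$, this gives $2/3$ instead of $1$. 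The expression $\frac{k}{2n_V}\cdot\frac{\cof{T}-c^*}{c^*}\cdot\frac{n_V}{k}$ in your sketch contains a factor $n_V/k$ that is not there. The correct value is $\frac{k}{2n_V}\cdot\frac{\cof{T}-c^*}{c^*}$, with $k/n_V=1-c^*/\cof{T}$. The deficit is a constant tending to $\frac12$ and it already occurs in the tight case, so neither reducing to that case nor tracking floors can absorb it.

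What you are missing is the structure the paper reads off the construction of the ordering in \cref{lemma: maximum-payment}. That ordering is a depletion order: if voter $b$ pays her maximal amount $c^*/(n_V-b)$ in some round, then every voter $a<b$ has already exhausted her budget. So the spending cannot sit on the late, expensive voters while the early, cheap ones stay idle, which is exactly your adversary's configuration.

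The paper uses this to restrict to executions in which the first $k+1$ voters exhaust their budgets and share the total spending. It keeps the total as $n_V\bud/n-c^*$ rather than $\cof{T}-c^*$, so no reduction to the tight case is needed. It then shows that $D(k)$ is non-increasing in $k$, using $\bud/n\ge c^*/n_V$. Evaluating at $k=n_V-1$ gives $\frac{n_V+1}{2n_V}\cdot\frac{n_V\bud/n-c^*}{c^*}\ge\frac12\bigl(\frac{\cof{T}}{c^*}-1\bigr)$. The paper justifies this restriction only briefly, but some constraint of this kind is indispensable.

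You could add and justify, for example, that while $d$ voters of $V$ are exhausted, no voter of $V$ pays more than $c^*/(n_V-d)$ per round. Then the $j$-th voter to run out has weight at least $n_V-j+1$, and a voter still active when $d$ are exhausted has weight at least $n_V-d$. The spending is therefore forced onto the top weights, whose average is at least $\frac{n_V+1}{2}$. This gives total satisfaction at least $\frac{(n_V+1)(n_V\bud/n-c^*)}{2c^*}$, and hence the bound.
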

\begin{proof}
     Let $I$ be an instance of the PB problem. Let $V \subseteq N$ be any $T$-cohesive group of voters for $I$ and let $n_V = |V|$. Let $\pi = \{0, \ldots, |V| - 1\}$ be the ordering of the voters in $V$ from the statement of \cref{lemma: maximum-payment}. From the lemma, we know that the maximum payment of any voter $i \in \pi$ is at most $\frac{\max_t \cof{t}}{n_V - i}$ and furthermore for $a,b \in \pi$ such that $a<b$, if voter $b$ realizes the maximum payment in some round, then voter $a$ must already have depleted her budget at the start of that round. Since we are providing a lower bound on the average satisfaction of $V$, it suffices to impose without loss of generality the following restriction on $I$ and the execution of \mes: \medskip
    
    \noindent There exists an $i \in \pi$ such that
        \begin{itemize}[leftmargin=*]
        \item[-] for all $j \in \pi: j \leq i$, voter $j$ pays the maximum amount $\frac{\max_t \cof{t}}{n_V - j}$, and
        \item[-] for all $j \in \pi: j > i$, voter $j$ pays $0$.
    \end{itemize}
    Furthermore, when $i>0$, every voter $j \in \pi: 0 \leq j \leq i$ depletes her budget. \medskip
    
     To see why the restrictions we impose provide a lower bound on the average satisfaction, consider a total payment vector $x = (x_0, \dots, x_{n_V-1})$ induced by the execution of \mes on the rounds $\pi$. Let $F(x)=\sum_{j=0}^{n_V-1} x_j/p_j$, where $p_j$ is the maximum per round payment for which it holds by \cref{lemma: maximum-payment} that $p_j\le \max_{t\in T}\cof{t}/(n_V-j)$. Now since $F(x)$ is coordinate-wise increasing in $x$, considering $p_j$ fixed for all $j \in [n_{|V|-1}\cup \{0\}]$, setting $x_j$ to $0$ for some $j$ can only decrease $F(x)$. We can observe that the family of payment vectors $\mathcal{X}_k = (x'_0,\ldots,x'_k, \ldots, x'_{n_V-1})$ indexed by $k \in \{0,\dots,n_{|V|-1}\}$ such that $x'_i = \bud/n$ for $i \leq k-1$ and $x'_i = 0$ for $i \geq k$ provides a lower bound to $F(x)$ for any payment vector $x$. Namely, given $x$ we can choose $k$ to be at most the number of entries in $x$ such that $x_i = \bud/n$, then it is clear, by fixing $p_j$ to its maximum value for any $j$, that $F(x) \geq F(\mathcal{X}_k)$ holds \footnote{If no entry satisfies $x_i = B/n$, then we set $x'_0 = \max_i x_i$ and $x'_i = 0$ for $i > 0$.}.  
    
     Let $k \in \pi \setminus \{0\}$ denote the last voter that depleted her budget; if such a voter does not exist, then let $k=0$. We know that the voters from $V$ initially had $n_V \cdot \bud/n$ credits allocated in total. After the rule terminates, they are left with at most $\max_{t \in T} \cof{t}$ credits, since otherwise they could afford to add another project to the proposal. Therefore, the voters spent at least $n_V \cdot \bud/n - \max_{t \in T} \cof{t}$ credits and in our case the total spending, i.e. $\sum_{i=0}^{k}x_i$ is equally shared among the first $k + 1$ voters in $\pi$, since by the restriction they all deplete their budget. Thus, the average satisfaction of $V$ as a function of $k$ can be lower bounded as follows:
    \begin{align*}
        D(k) &\geq \frac{1}{n_V} \cdot \lb \sum_{i = 0}^{k}\frac{\frac{1}{k+1} \cdot \lb n_V \cdot \bud/n -    
            \max_{t \in T}\cof{t}\rb}{\frac{\max_{t \in T} 
            \cof{t}}{n_V - i}} \rb \\
            &= \lb \frac{1}{k+1} \cdot \frac{\bud/n}{\max_{t \in T} \cof{t}} - \frac{1}{(k+1)\cdot n_V}\rb \cdot \sum_{i=0}^{k}(n_V - i) \\
            &= \lb \frac{1}{k+1} \cdot \frac{\bud/n}{\max_{t \in T} \cof{t}} - \frac{1}{(k+1)\cdot n_V}\rb \cdot \lb \sum_{i=0}^{k}n_V - \sum_{i=0}^{k}i\rb \\
            &= \lb \frac{1}{k+1} \cdot \frac{\bud/n}{\max_{t \in T} \cof{t}} - \frac{1}{(k+1)\cdot n_V}\rb \cdot \lb (k+1)\cdot n_V - \frac{k \cdot (k+1)}{2} \rb \\
            &= \underbrace{\frac{\bud}{n\cdot(k+1)\cdot \max_{t \in T} \cof{t}} \cdot \lb (k+1)\cdot n_V - \frac{k\cdot(k+1)}{2}\rb}_{(\mathcal{I})} \\ 
            & \hspace{40mm}-
            \underbrace{\frac{1}{n_V \cdot (k+1)} \cdot \lb (k+1) \cdot n_V - \frac{k \cdot (k+1)}{2}\rb}_{(\mathcal{II})}
    \end{align*}
    Next, we consider the two terms separately. By expanding the first term ($\mathcal{I}$) we get:
    \begin{align*}
        &\frac{\bud}{n\cdot(k+1)\cdot \max_{t \in T} \cof{t}} \cdot \lb (k+1)\cdot n_V - \frac{k\cdot(k+1)}{2}\rb = \\  &\frac{\bud\cdot(k+1)\cdot n_V}{(k+1)\cdot n \cdot \max_{t \in T}\cof{t}}  - \frac{\bud\cdot k \cdot (k+1)}{2\cdot(k+1)\cdot n \cdot \max_{t \in T}\cof{t}} \\
        &= \frac{\bud\cdot n_V}{n \cdot \max_{t \in T}\cof{t}} -\frac{\bud\cdot k}{2\cdot n \cdot \max_{t \in T}\cof{t}}
    \end{align*}
    The second term ($\mathcal{II}$) simplifies to:
    \begin{align*}
        -\frac{1}{n_V \cdot (k+1)} \cdot \lb (k+1) \cdot n_V - \frac{k \cdot (k+1)}{2}\rb &= -\frac{(k+1)\cdot n_V}{(k+1) \cdot n_V} + \frac{k \cdot (k+1)}{2 \cdot (k+1) \cdot n_V} 
        = -1 + \frac{k}{2 \cdot n_V}
    \end{align*}
    Putting everything together, we obtain:
    \begin{equation*}
        D(k) = \frac{\bud \cdot n_V}{n \cdot \max_{t \in T} \cof{t}} - \frac{\bud \cdot k}{2 \cdot n \cdot \max_{t \in T} \cof{t}} - 1 + \frac{k}{2 \cdot n_V}
    \end{equation*}
    From the definition of cohesiveness (\Cref{def: tcohesive}), we know that $\frac{\bud}{n} \geq \frac{\cof{T}}{n_V} \geq \frac{\max_{t \in T} \cof{t}}{n_V}$. From this, it follows that $D(k)$ is a non-increasing function of $k$, and the minimum is acquired at $k = n_V - 1$. Thus, we evaluate the function at $n_V - 1$: 
    \begin{align*}
        D(n_V - 1) &= \frac{\bud \cdot n_V}{n \cdot \max_{t \in T} \cof{t}} - \frac{\bud \cdot(n_V - 1)}{2 \cdot n \cdot \max_{t \in T} \cof{t}} - 1 + \frac{n_V - 1}{2 \cdot n_V} \\
        &= \frac{\bud \cdot n_V}{n \cdot \max_{t \in T} \cof{t}} - \frac{\bud \cdot n_V}{2\cdot n \cdot \max_{t \in T} \cof{t}} + \frac{\bud}{2 \cdot n \cdot \max_{t \in T} \cof{t}} - 1 + \frac{n_V}{2 \cdot n_V} - \frac{1}{2 \cdot n_V} \\
        &= \frac{\bud \cdot n_V}{2\cdot n \cdot \max_{t \in T} \cof{t}} + \frac{\bud}{2 \cdot n \cdot \max_{t \in T} \cof{t}} - 1 + \frac{1}{2} - \frac{1}{2 \cdot n_V} \\
        &= \frac{\bud \cdot (n_V + 1)}{2\cdot n \cdot \max_{t \in T} \cof{t}} - \frac{1}{2} - \frac{1}{2 \cdot n_V} \\
        &= \frac{1}{2} \cdot \lb \frac{\bud \cdot n_V}{n \cdot \max_{t \in T} \cof{t}} - 1\rb + \frac{1}{2} \cdot \lb \frac{\bud}{n \cdot \max_{t \in T} \cof{t}} - \frac{1}{n_V}\rb \\
        &\geq \frac{1}{2}\cdot\left(\frac{\cof{T}}{\max_{t \in T} \cof{t}} - 1 \right)
    \end{align*}
    where in the last inequality we again used the definition of cohesiveness (\cref{def: tcohesive}).\medskip 
    
    \noindent This concludes the proof.
\end{proof}

Next, we provide an (almost) matching upper bound for the proportionality degree of \mes. To do that, we construct an instance with a set of projects in which the minimum average satisfaction of any $T$-cohesive group of voters is upper bounded by an appropriate quantity. Since the proportionality degree requires the guarantee to hold for every $T$, this also imposes an upper bound on the proportionality degree of \mes. 

\begin{theorem}
    There exists a PB instance $I$, a set of projects $T \subseteq \proj$, and a $T$-cohesive group $V \subseteq \VV(T)$ such that the average satisfaction of the voters in $V$ is at most $\frac{1}{2}\cdot\frac{\cof{T}}{\max_{t \in T} \cof{t}} + 1$. 
\end{theorem}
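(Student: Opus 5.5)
The plan is to build an explicit instance mirroring the worst case in the proof of \cref{thm:mes-prop-degree}: voters of $V$ are drained one by one by projects outside $T$, while the projects in $T$ are bought by a shrinking set of group members. To keep the analysis clean, I would work in the multiwinner special case. All projects have unit cost, so $\max_{t\in T}\cof{t}=1$ and $\cof{T}=|T|=\ell$. I take $n_V=|V|$ voters with per-voter budget $\bud/n=\ell/n_V$, which makes $V$ exactly $T$-cohesive. The target bound then reads $\ell/2+1$.

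\textbf{Construction.} For each voter $j\in\{0,\dots,n_V-1\}$ of $V$ I add a private project $c_j\notin T$. It is approved by $j$ and by a block $O_j$ of outside voters who approve nothing else. The size of $O_j$ is tuned so that $c_j$ becomes $\rho$-affordable at a value of $\rho$ just below the current $\rho$ of the cheapest remaining project of $T$. The timing should be as follows. While $r$ voters of $V$ still have money, a project of $T$ costs each of them $1/r$. I want \mes to select $c_j$ at the moment voter $j$ has paid for roughly the prescribed number of $T$-projects, and $c_j$ must charge voter $j$ its entire leftover budget. Achieving this requires $|O_j|\cdot\min(\bud/n,\rho)\approx 1-(\text{leftover of }j)$. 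I would also pick $n$ huge, so that the outside blocks exist and each block's budget is much larger than what they need. Finally, tiny perturbations of the block sizes break ties in favour of $c_j$. This realises exactly the extremal schedule of \cref{lemma: maximum-payment}: the payer set shrinks from $n_V$ to $0$ roughly linearly over the run.

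\textbf{Counting.} Each selected project of $T$ bought by $r$ remaining voters contributes $r/n_V$ to $\text{avg}_W(V)$, and each $c_j$ contributes $1/n_V$. The $c_j$ together therefore contribute at most $1$ to the average. For the $T$-part, note that the group's total money is $n_V\cdot\ell/n_V=\ell$. Each $T$-project has unit cost, and the number of group members paying for it equals $n_V$ times its contribution to the average. Hence the $T$-contribution to the average equals $\frac{1}{n_V}\sum_{\text{T-projects}} r$. If the drain schedule has $r$ decrease linearly in the money spent, the group spends only about half its budget on $T$, the rest being absorbed by the $c_j$. The $T$-contribution is then about $\ell/2$. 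Summing gives $\text{avg}_W(V)\le \ell/2+1=\frac12\cdot\frac{\cof{T}}{\max_{t\in T}\cof{t}}+1$.

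\textbf{Main obstacle.} The delicate step is making \mes actually follow this schedule. I must verify the following:
\begin{itemize}
\item at each round the intended project is the unique minimiser of $\rho$;
\item $c_j$ charges voter $j$ exactly its leftover, i.e.\ $\rho\ge$ leftover of $j$, while the outside voters pay $\rho<\bud/n$;
\item no outside voter or $c_j$ ever interferes otherwise.
\end{itemize}
I would first fix the target $\rho$ values explicitly: at stage $r$, $\rho_T=1/r$ for $T$, and $c_j$ gets $\rho$ slightly smaller, with $|O_j|$ solving $|O_j|\rho+\text{leftover}_j=1$. I would then check monotonicity of these $\rho$'s inductively. Any rounding losses should be absorbed into the additive constant $+1$, possibly by choosing $n_V$ as a multiple of $\ell$.
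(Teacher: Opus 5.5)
\textbf{There is a genuine gap.} It starts with your counting step, and once that is corrected your construction cannot reach the target. Because $V$ is $T$-cohesive, every voter in $V$ approves every project of $T$. So each selected project of $T$ adds exactly $1$ to $\mathrm{avg}_W(V)$, no matter how many members of $V$ paid for it; it does not add $r/n_V$. Thus $\mathrm{avg}_W(V)\ge |W\cap T|$. With unit costs, $|W\cap T|$ is at least the amount $V$ spends on $T$. To get down to $\ell/2+1$ you must therefore divert roughly half of $V$'s total budget $\ell$ to projects outside $T$. A shrinking payer set for $T$ does not help by itself.

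One private project per voter cannot divert that much under \mes. Voter $j$ pays $\min(b_j,\rho_{c_j})$ for $c_j$, and $c_j$ is chosen ahead of the next $T$-project only if $\rho_{c_j}\le\rho_T$. The members of $V$ whose $c_j$ has not yet been bought have identical payment histories, since they have only paid for $T$. Hence they all hold the same leftover $b\le \ell/n_V$. If there are $s$ of them, then $s\cdot\min(b,\rho_T)\le 1$, so $c_j$ extracts at most $\min(\ell/n_V,1/s)$. Summing over $s$, the private projects absorb at most $\sum_{s\ge 1}\min(\ell/n_V,1/s)=O(\log\ell)$ while $T$ is still affordable, and less than $1$ afterwards. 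This forces $|W\cap T|\ge \ell-O(\log\ell)$, far above $\ell/2+1$.

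In fact two of your own requirements contradict each other. $\rho_T=1/r$ means each of the $r$ active voters, including $j$, still holds at least $1/r$. So requiring $\rho_{c_j}\ge$ (leftover of $j$) forces $\rho_{c_j}\ge\rho_T$, while you also want $\rho_{c_j}$ slightly below $\rho_T$. Resizing $O_j$ or perturbing ties only moves $\rho_{c_j}$; it cannot get around this cap.

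The missing idea, which is what the paper uses, is to give each voter $j$ of $V$ a whole family $Z_j$ of $|T|$ outside projects, shared with a block $Q_j$ of $n_V-j$ voters. Then the $\rho$ of these projects equals that of $T$ at every stage, and with ties broken against $T$, voter $j$ is drained gradually at about $\cof{t}/(n_V-j)$ per project. Voter $j$'s satisfaction then comes from roughly $\frac{\bud/n}{\cof{t}/(n_V-j)}$ projects of $Z_j$, each worth only $1/n_V$ to the average. Summing over $j$ gives about $\frac12\cdot\frac{\cof{T}}{\max_{t\in T}\cof{t}}$ plus an additive constant.
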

\begin{proof}
To simplify our analysis, we consider the following version of \mes with \emph{allowance}: consider some point during the execution of the rule, when a project $t$ is added to the proposal while some of the voters that approve $t$ do not have enough leftover budget to support it, and hence are compensated by the other voters in $N$. We will instead assume that (only in the first occurrence of this), the rule adds auxiliary credits to these voters ``for free'' to make the project \affordable-affordable, i.e. until the cost can be split equally among those that approve it. Then again by the definition of \mes, the rule will select the project with the minimum \affordable. It is not hard to see that the modified rule leads to a higher average satisfaction, compared to the ``vanilla'' version of \mes.

\begin{figure}[h!]
    \centering
    \input{figure1}
    \caption{The figure shows groups of voters (dots) organized into sets $Q_j$, where $j \in [0,|V|-2]$. Each group approves a distinct set $Z_j$ of projects, indicated by a blue rectangle. The set $V$ contains one voter from each $Q_j$, and each of these voters additionally approves a common set $T$ of projects (indicated as a bold blue rectangle). For $j \in \{0,\ldots,|V|-2\}$, the size of set $Q_j$ is $|V|-j$.}
    \label{fig: upper-bound}
\end{figure}
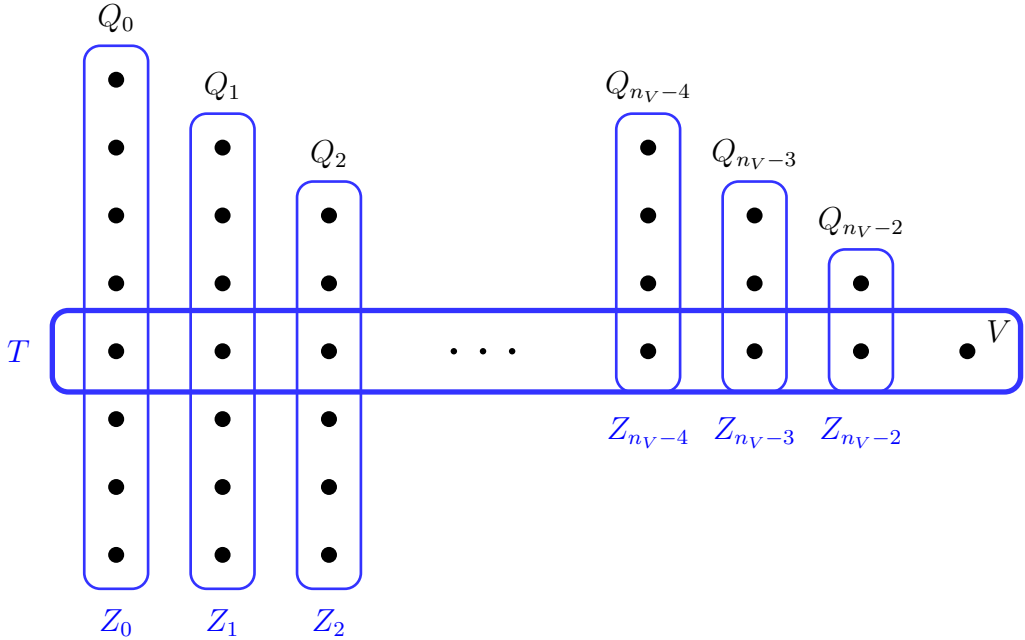

Now consider the following instance, shown in \cref{fig: upper-bound}. Let $V \subseteq N$ be a $T$-cohesive group of voters and $|V| = n_V$. For $j \in \{0, \ldots ,n_V - 2\}$, let $Q_j$ be an other set of voters, such that $|Q_j| = n_V - j$ and $|Q_j \cap V| = 1$, i.e., exactly one voter from  each one of these sets approves the projects in $T$ as well. Furthermore, the voters in the set $Q_j$ also approve a set of projects $Z_j$, for $j \in \{0, \ldots ,n_V - 2\}$. 
The cost of all projects is the same. Furthermore, we have that $|Z_j| = |T|$ and that $\frac{\bud}{n} = \frac{\cof{T}}{n_V}$. This latter fact in particular implies that the set $Q_0$ 
is $Z_0$-cohesive, but for $j \in \{1,\ldots,n_V-2\}$, the set $Q_j$ is not $Z_j$-cohesive.  

Consider the execution of \mes on the instance above. The rule selects a project that is \affordable-affordable for a minimum \affordable. In case when there are multiple candidates to be selected, we will without loss of generality assume that the rule selects a project in some set $Z_j$ before any project in $T$. In the first iteration, the candidate projects will be 
\begin{itemize}
    \item[-] the project with the minimum cost from $T$ with \affordable equals to $\frac{\min_{t \in T} \cof{t}}{n_V}$, to be split equally among the voters $V$, and 
    \item[-] the project with minimum cost from $Z_0$ with \affordable equal to $\frac{\min_{t \in Z_0} \cof{t}}{|Q_0|} = \frac{\min_{t \in T} \cof{t}}{n_V}$
\end{itemize}

By our tie-breaking rule, the project from $Z_0$ will be selected. In the second iteration, again a project from $Z_0$ will be selected. Continuing like this, all of the projects in $Z_0$ will be selected before any project from any other set will be selected, until all of the voters in $Q_0$ deplete their assigned budgets. Subsequently, for any $j \in \{1,\ldots, n_V-2\}$, the projects from $Z_j$ will be selected with \affordable equal to $\frac{\min_{t \in Z_j}\cof{t}}{n_V - j}$. In case the budget from the voters in $Q_j$ runs out, then by our assumption the rule will provide the extra option to build the project for the same \affordable, but by spending only the leftover budget.

Let $\ell$ be the single voter that approves some project in $T$ but no projects in $Q_j$ for any $j \in \{0,\ldots,n_V-2\}$. During the last iteration, all of the voters besides voter $\ell$ will have depleted their budgets to support projects in $Q_j$ for some $j \in \{0,\ldots,n_V-2\}$. Thus, voter $\ell$ will have to pay at least $\min_{t \in T} \cof{t}$ for some project in $T$. Without loss of generality, if we consider an ordering $\pi = \{0,\ldots,n_V-1\}$ of the voters in $V$, then each voter $i \in \pi$ would be charged at least $\frac{\min_{t \in T} \cof{t}}{n_V - i}$ for projects. Consider the allowance of the rule, the average satisfaction of the voters in $V$ is then at most

\begin{align*}
    \frac{1}{n_V} \cdot \lb \sum_{i = 0}^{n_V - 1} \bigg\lceil \frac{\bud/n}{ \frac{\min_{t \in T}\cof{t}}{n_V - i}}\bigg\rceil\rb &\leq \frac{1}{n_V} \cdot \lb \sum_{i = 0}^{n_V - 1} \lb \frac{\bud/n}{\frac{\min_{t \in T}\cof{t}}{n_V - i}} + 1\rb\rb \\
    &= \frac{1}{n_V} \cdot \frac{\bud/n}{\min_{t \in T}\cof{t}} \cdot \sum_{i = 0}^{n_V - 1}(n_V - i) + 1 \\
    &= \frac{1}{n_V} \cdot \frac{n_V \cdot (n_V + 1)}{2} \cdot \frac{\bud/n}{\min_{t \in T}\cof{t}} + 1 \\
    &= \frac{1}{2} \cdot \lb \frac{n_V + 1}{n_V}\rb \cdot \lb \frac{\cof{T}}{\min_{t \in T} \cof{t}}\rb + 1 \\
    &= \frac{1}{2} \cdot \lb \frac{\cof{T}}{\max_{t \in T} \cof{t}+\delta}\rb + 1,
\end{align*}
where $\delta = \max_{t \in T} \cof{t}- \min_{t \in T} \cof{t}$. By taking $n_V \rightarrow \infty$ and $\delta \rightarrow 0$, the bound follows. 

\end{proof}

\noindent Furthermore, we know that the Method of Equal Shares satisfies EJR in (approval) PB. So, similarly to the literature of multi-winner voting \citep{sanchez2017proportional} we will now provide a more general bound for every PB rule that satisfies EJR. The bound is similar to the one that we proved for \mes in \Cref{thm:mes-prop-degree}, except that we now also have a multiplicative factor of the ratio between the minimum and the maximum cost of any project in $T$. This implies that when the costs of the projects are very similar (i.e., we are close to the multi-winner setting), then the guarantee of \mes is also achieved by any other rule that satisfies EJR. Our bound will apply to instances that satisfy a natural condition, namely that for any project $t \in \mathcal{P}$, $\cof{t} \cdot n \geq B$. In other words, there is not project that is ``very cheap'', to the extend that a single voter with a $B/n$ share of the budget could financially support it single-handedly. We will refer to those instances as \emph{ordinary}.

Notice that a similar condition is imposed by \cite{sanchez2017proportional} for the multiwinner voting setting. In particular, although not explicitly mentioned, their proof (of Theorem 6) seems to require that the number of voters be at least the size of the committee; the equivalent condition in the PB setting is the one that we impose here. 

\begin{theorem}
    Let $T \subseteq \proj$ and let $\pbrule$ be a PB rule that satisfies EJR. Then, the proportionality degree of $\pbrule$ on any ordinary instance satisfies \[
    d_{\pbrule}(T) \geq \frac{\min_{t \in T}\cof{t}}{\max_{t \in T}\cof{t}} \cdot \frac{1}{2} \cdot \left(\frac{\cof{T}}{\max_{t \in T} \cof{t}} - 1\right)
    \] 
\end{theorem}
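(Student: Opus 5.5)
I would adapt the iterative-peeling argument of \cite{sanchez2017proportional} for EJR in multiwinner voting to the PB setting. Write $c_{\max}=\max_{t\in T}\cof{t}$, $c_{\min}=\min_{t\in T}\cof{t}$, fix an ordinary instance, a $T$-cohesive group $V$ with $n_V=|V|$, and let $W=\pbrule(I)$. Put $\ell=|T|$. The aim is a lower bound on $\sum_{i\in V}|A_i\cap W|$, obtained by repeatedly applying EJR to shrinking subgroups of $V$ that are cohesive for shrinking subsets of $T$.

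The key observation is as follows. Take any $V'\subseteq V$ and any $T'\subseteq T$ with $\cof{T'}/\bud\le |V'|/n$. Then $V'$ is $T'$-cohesive, so by EJR some voter in $V'$ has at least $|T'|$ approved projects in $W$. Since every project in $T$ costs at most $c_{\max}$, we can always choose $T'\subseteq T$ with
\[
|T'| = \min\left(\ell,\left\lfloor \frac{|V'|\,\bud}{n\,c_{\max}}\right\rfloor\right).
\]
This means a subgroup of size $s$ is guaranteed one member with satisfaction at least $\lfloor s\bud/(n c_{\max})\rfloor$ (capped at $\ell$).

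The peeling proceeds like this. Start with $V_0=V$. Apply EJR to find a voter $v_0$ with satisfaction at least $\lfloor n_V\bud/(nc_{\max})\rfloor$, remove $v_0$, and repeat on $V_1=V\setminus\{v_0\}$, and so on. The $j$-th removed voter gets at least $\lfloor (n_V-j)\bud/(nc_{\max})\rfloor$. Summing and dividing by $n_V$ gives
\[
\frac{1}{n_V}\sum_{j=0}^{n_V-1}\left\lfloor\frac{(n_V-j)\bud}{n c_{\max}}\right\rfloor .
\]
We lower bound each floor by $x-1$, or group the voters into blocks of size $\lceil n c_{\max}/\bud\rceil$ so that each block loses at most one unit. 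This yields roughly $\frac{\bud(n_V+1)}{2nc_{\max}}-1$. Using $\bud n_V/n\ge \cof{T}$, this is comparable to $\frac12(\cof{T}/c_{\max}-1)$.

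The main obstacle is the rounding loss: the naive $x-1$ bound costs a full unit per voter, which is too much. This is where the ordinary condition $\cof{t}\,n\ge \bud$ and the factor $c_{\min}/c_{\max}$ come in. Since a single voter's share $\bud/n$ is at most $c_{\min}$, the per-voter entitlement increments are at most one project. So I would peel in blocks: remove voters until the available budget $|V'|\bud/n$ drops by $c_{\max}$, which happens after at most $\lceil nc_{\max}/\bud\rceil$ voters. I would then count satisfaction per block rather than per voter.

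To convert guarantees measured in budget into guarantees measured in numbers of projects, I bound $|T'|\ge \cof{T'}/c_{\max}$ and $\cof{T}\ge \ell c_{\min}$. I expect the clean inequality to go through by losing the multiplicative factor $c_{\min}/c_{\max}$ and absorbing the rounding slack. Concretely, the bound obtained is $\frac{c_{\min}}{c_{\max}}\cdot\frac12\cdot\frac{n_V\bud}{nc_{\max}}$ minus a term at most $\frac12$, and $n_V\bud/n\ge\cof{T}$ then finishes the proof. I would check the block accounting carefully, since it is where the ordinary assumption is genuinely used.
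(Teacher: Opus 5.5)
Your skeleton is sound, and it is a different route from the paper's. Put $a=\bud/(nc_{\max})$. Applying EJR to the current subgroup $V'$ with some $T'\subseteq T$ of size $\min(\ell,\lfloor |V'|a\rfloor)$, and then deleting the witnessed voter, validly gives $\sum_{i\in V}|A_i\cap W|\ge\sum_{k=1}^{n_V}\min(\ell,\lfloor ka\rfloor)$. (Your sum drops the cap $\ell$; that is cosmetic.)

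The gap is the rounding step, which is the whole content of the argument. You stop there, and the plan you sketch for it would not close:
\begin{enumerate}
\item The block device does not do what you claim. Take $a=1/q$. Over a block of $q=\lceil 1/a\rceil$ consecutive voters the floors are $0,\dots,0,1$, which is a loss of $(q-1)/2$ relative to $\sum ka$, not one unit. Read per voter, ``at most one unit'' is just the naive $x-1$ bound again.
\item The endpoint you announce, $\frac{c_{\min}}{c_{\max}}\cdot\frac12\cdot\frac{n_V\bud}{nc_{\max}}$ minus something at most $\frac12$, only yields $\frac{c_{\min}}{c_{\max}}\cdot\frac{\cof{T}}{2c_{\max}}-\frac12$. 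This falls short of the target by $\frac12\bigl(1-\frac{c_{\min}}{c_{\max}}\bigr)$; the subtracted term would have to be at most $\frac{c_{\min}}{2c_{\max}}$.
\item You never identify a step that uses the ordinary condition or produces the factor $c_{\min}/c_{\max}$. In your scheme neither is needed.
\end{enumerate}

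The missing computation is a level count. For $1\le j\le M:=\min(\ell,\lfloor an_V\rfloor)$, the number of $k\le n_V$ with $\lfloor ka\rfloor\ge j$ is $n_V-\lceil j/a\rceil+1>n_V-j/a$. With $y=an_V\ge\cof{T}/c_{\max}$, this gives
\[
\frac{1}{n_V}\sum_{k=1}^{n_V}\min(\ell,\lfloor ka\rfloor)\;>\;M-\frac{M(M+1)}{2y}.
\]
There are two cases:
\begin{itemize}
\item If $M=\lfloor y\rfloor=y-\theta$, the right-hand side equals $\frac{y-1}{2}+\frac{\theta(1-\theta)}{2y}\ge\frac{y-1}{2}$.
\item If $M=\ell<\lfloor y\rfloor$, then $y\ge\ell+1$ and the right-hand side is at least $\ell/2\ge\cof{T}/(2c_{\max})$.
\end{itemize}
Either way the average satisfaction is at least $\frac12\bigl(\frac{\cof{T}}{c_{\max}}-1\bigr)$. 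This dominates the stated bound, since $c_{\min}\le c_{\max}$ and $\cof{T}\ge c_{\max}$. Completed this way, your route proves more than the theorem: it needs neither the ordinary assumption nor the loss of $c_{\min}/c_{\max}$.

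The paper instead removes projects of $T$ one at a time and peels nested blocks $N_m,N_{m-1},\dots,N_1$. Ordinariness is used to show that each later block has at least $c_{\min}n/\bud$ voters, and that block-size bound is exactly where $c_{\min}/c_{\max}$ enters.

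Your voter-by-voter peeling also uses EJR in its literal form, one witness per cohesive group. The paper asserts that \emph{every} voter of a block has satisfaction at least $m$. That requires forming the blocks from the highest-satisfaction voters, i.e.\ the counting fact that fewer than $\cof{T'}\,n/\bud$ voters of $V$ can have satisfaction below $|T'|$.
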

\begin{proof}
Let $s = \frac{n}{\bud}$, and let $|V| = n_V$. For the proof, we will construct a set of groups of voters $N_j$ for $j\in \{m,m-1,\ldots, 1\}$. We will also fix an order of $\pi$ of the projects in $T$, namely $\{t_1,t_2,\ldots,t_m\}$. We will prove the following properties:
\begin{enumerate}
    \item for every $j$, for every voter $i \in N_{m-j}$, it holds that $|A_i \cap W| \geq m-j$,\label{enum-here:item1}
    \item for every $j$, $|N_{m-j}| \geq \min_{t \in T} \cof{t} \cdot s$,\label{enum-here:item2}
    \item $|V\setminus N_m| < \cof{T} \cdot s$.\label{enum-here:item4}
\end{enumerate}
We will first explain how to construct group $N_m$. At the beginning, we set $N_m = \emptyset$. At that point, we have that $|V \setminus N_m| \geq \cof{T}\cdot s$; this follows from the fact that $V$ is $T$-cohesive. Next, we iteratively add voters from $V$ to $N_m$, as long as the following inequality still holds:
\begin{equation}
|V\setminus N_m| \geq \cof{T} \cdot s.
\end{equation}
In particular, after the last voter $i_\ell$ is added to $N_m$, we have
\begin{equation}\label{eq:VsetminusN_m}
|V\setminus N_m| < \cof{T} \cdot s,
\end{equation}
therefore obtaining \Cref{enum-here:item4} above. Notice that the set $V\setminus N_m \cup i_\ell$ (i.e., the set $V\setminus N_m$ before the last voter $i_\ell$ was added to $N_m$) is $T$-cohesive. Additionally, each of the voters in $N_m$ approves all projects in $T$, since $T \subseteq \bigcup_{i \in V} A_i$. As a result, by the EJR property (\Cref{def: EJR-PB}), we obtain that for every voter $i \in N_m$, it holds that $|A_i \cap W| \geq m$. We have thus obtained \Cref{enum-here:item1,enum-here:item2} above for the set $N_m$.

Next, we explain how to construct the group $N_{m-1}$; the construction of the remaining groups for $j =\{m-2,\ldots,1\}$ will be identical. Similarly to above, we start with $N_{m-1} = \emptyset$ and we add voters from $V\setminus N_m$, as long as the following inequality holds:
\begin{equation}
|V\setminus N_m \setminus N_{m-1}| \geq \cof{T\setminus{\{t_1\}}} \cdot s.
\end{equation}
Again, after the last voter $i_\ell'$ is added to $N_{m-1}$, we have 
\begin{equation}
|V\setminus N_m \setminus N_{m-1}| < \cof{T\setminus{\{t_1\}}} \cdot s.
\end{equation}
Similarly to above, the set $V\setminus N_m \setminus N_{m-1} \cup i_\ell'$ is $(T \setminus \{t_1\})$-cohesive, and each of the voters in $N_{m-1}$ approves all projects in $T\setminus \{t_1\}$. So, by the EJR property, we obtain that for every voter $i \in N_{m-1}$, it holds that $|A_i \cap W| \geq m-1$. We have thus obtained \Cref{enum-here:item1} above for the set $N_{m-1}$. To obtain \Cref{enum-here:item2}, observe that $|N_m| = \lfloor n_V - \cof{T}\cdot s\rfloor + 1$, and hence $|V \setminus N_m| = n_V - |N_m| \geq \cof{T} \cdot s -1$. Therefore:
\begin{align*}
|V \setminus N_m| & \geq \cof{T} \cdot s -1 \\
&= \cof{T} \cdot s -1 + \cof{t_1} \cdot s - \cof{t_1}\cdot s \\
&= \cof{T\setminus\{t_1\}}\cdot s + \cof{t_1}\cdot s -1
\end{align*}
Since the instance is ordinary, we have that $\cof{t_1}\cdot s \geq 1$, it also holds that $|V \setminus N_m|\geq \cof{T\setminus\{t_1\}}\cdot s$. This implies that the size of $|N_{m-1}| \geq \cof{t_1}\cdot s \geq \min_{t \in T} \cof{t} \cdot s$, thus obtaining \Cref{enum-here:item2}. 

Now that we have \Cref{enum-here:item1,enum-here:item2,enum-here:item4}, we are ready to prove the bound on the proportionality degree. The average satisfaction of voters in $V$ can be written as a convex combination of the average satisfaction of the voters in $N_m$ and those in $V \setminus N_m$. Since the average satisfaction of the former set of voters is $m$ (from \Cref{enum-here:item1}), we will lower bound the average satisfaction of the voters in $V$, by that of the voters in $V \setminus N_m$. This latter quantity can be expressed as:

\begin{align*}
    \frac{1}{|V \setminus N_m|} \cdot \sum_{i \in V \setminus N_m}|A_i \cap W| 
    & \geq \frac{1}{\cof{T} \cdot s} \cdot \sum_{i \in V \setminus N_m}|A_i \cap W| \\
    &\geq \frac{1}{\cof{T} \cdot s } \cdot \sum_{k=1}^{m-1}|N_k|\cdot k \\ 
    &\geq \frac{1}{\cof{T} \cdot s} \cdot s \cdot \min_{t \in T}\cof{t} \cdot \sum_{k=1}^{m-1}k \\
    &= \frac{m \cdot \min_{t \in T}\cof{t}}{\cof{T}} \cdot \left(\frac{m - 1}{2}\right) \\
    &\geq \frac{\min_{t \in T}\cof{t}}{\max_{t \in T}\cof{t}} \cdot \frac{1}{2} \cdot \left(\frac{\cof{T}}{\max_{t \in T} \cof{t}} - 1\right)
\end{align*}
where the first inequality follows from \Cref{enum-here:item4}, the second inequality follows from \Cref{enum-here:item1}, and the third inequality follows from \Cref{enum-here:item2}. This concludes the proof. 
 
\end{proof}

\newpage
\section{Proportionality Degree of the Phragmén Sequential Rule}
We now move on to studying the proportionality degree of another well-known rule for PB, namely the {\sc Phragmén} Sequential rule \citep{Phragmen1894methode}. This rule does not achieve the strong proportionality properties of EJR, but it satisfies the weaker axiom of Proportional Justified Representation (PJR), see \cite{brill2024Phragmen}. One might feel inclined to believe that the stronger axiomatic properties of \mes would result in a larger proportionality degree as well. However, as observed also by  \cite{skowron2021proportionality}, it is clearly conceivable for a rule to violate EJR while achieving a high proportionality degree. This is because while the satisfaction of \emph{one} voter may be low, the satisfaction of the other voters in the same cohesive group may be fairly high. 

We provide the definition of the {\sc Phragmén} Sequential rule for PB below. Our definition follows the recent literature on computational social choice, e.g., see \citep{los2022proportional,peters2021proportional}. 

\begin{tcolorbox}[
    standard jigsaw,
    opacityback=0,
]
    \begin{definition}[{\sc Phragmén} rule \citep{los2022proportional}]
    The rule starts with an empty proposal. Every voter gets credits continuously at the rate of one unit of credits per unit of time. At the first moment $t$ when there is a group of voters $V$ who all approve a not-yet-selected project $p$, and who together have $\cof{p}$ units of credits, the rule adds $p$ to the proposal and asks the voters from $V$ to pay the cost of $p$ (i.e. the rule resets the balance of each voter from $V$), while the others keep their so far earned money. The rule stops when it would select a project which would overshoot the budget.     
\end{definition}
\end{tcolorbox}

Next, we present the proofs for the proportionality degree of the {\sc Phragmén} rule for the approval-based PB model. Our lower bound proof leverages a potential function argument, similarly as in \cite{skowron2021proportionality} for the case of multiwinner elections. Namely we have the following theorem.

\begin{theorem}
    Let $T \subseteq \proj$, for each $T$-cohesive group of voters the proportionality degree of the {\sc Phragmén} rule satisfies $d_{\textit{Phrag}}(T) \geq \frac{1}{2} \cdot \left(\frac{\cof{T}}{\max_{t \in T} \cof{t}} - 1 \right)$
\end{theorem}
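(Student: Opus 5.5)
Following the approach of \cite{skowron2021proportionality}, the plan is a potential (load) argument on the Phragm\'{e}n execution, restricted to a fixed $T$-cohesive group $V$ with $n_V=|V|$. Write $c_{\max}=\max_{t\in T}\cof{t}$. Rescale so that each voter receives credits at rate $1$ and the process reaches total spending of $\bud$ at time $\tau^\ast=\bud/n$; thus a voter's "budget" in the PB sense is $\tau^\ast$ time units of accrued credit. Cohesiveness gives $n_V\tau^\ast\ge \cof{T}$.

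\emph{Step 1: termination.} As long as some project of $T$ is unselected, the members of $V$ can jointly buy it once their balances sum to $\cof{t}\le c_{\max}$. Hence, at every moment before a project of $T$ remains unselected, the total balance held by $V$ is below $c_{\max}$, or else Phragm\'{e}n would already have bought something. This is the analogue of the ``leftover at most $\max_{t\in T}\cof{t}$'' fact used in the proof of \cref{thm:mes-prop-degree}. If all of $T$ is selected, the satisfaction is $|T|$ and we are done. Otherwise, I will argue that by time $\tau^\ast$ the voters of $V$ have spent at least $n_V\tau^\ast - c_{\max}$. The subtlety is that the rule may stop early because some project overshoots the budget. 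To handle this, I would compare with the moment the total spending reaches $\bud$, using that total accrued credit is $n\tau$ and all but the unspent balances have been spent.

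\emph{Step 2: per-purchase payment bound, the analogue of \cref{lemma: maximum-payment}.} Whenever a project bought by a set $S$ includes voters of $V$, consider the $V$-members with nonzero balance. If $k$ of them have already reset since the last $T$-purchase, the remaining $n_V-k$ voters jointly could buy a $T$ project once their combined balance reaches $c_{\max}$. So each voter is charged at most roughly $c_{\max}/(n_V-i)$ in a suitable ordering. The cleaner route is the potential function: define $\Phi(\tau)=\sum_{i\in V}(\text{number of projects in }A_i\cap W \text{ by time }\tau)$. Each purchase that resets $j$ voters of $V$ whose combined balance is at most $c_{\max}$ increases $\Phi$ by $j$ while consuming at most $c_{\max}$ of $V$'s money. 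Among these, the $r$-th voter to be reset in a sorted ordering carries a balance that is a fraction at most $1/(n_V-r)$ of $c_{\max}$ in an amortized sense.

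\emph{Step 3: summation.} Summing the per-voter payment bounds exactly as in the computation of $D(k)$ in the proof of \cref{thm:mes-prop-degree} gives an average satisfaction of at least $\frac{1}{n_V}\sum_{i=0}^{n_V-1}\frac{\tau^\ast - c_{\max}/n_V}{c_{\max}/(n_V-i)}$. This simplifies to $\frac12\bigl(\frac{n_V\tau^\ast}{c_{\max}}-1\bigr)+\frac12\bigl(\frac{\tau^\ast}{c_{\max}}-\frac1{n_V}\bigr)$, and cohesiveness then yields the claimed bound. I can reuse that algebra verbatim.

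The main obstacle is Step 2. Phragm\'{e}n charges voters their entire current balance rather than an equal price, so the clean ordering argument of \cref{lemma: maximum-payment} must be replaced by the observation that any instant's total $V$-balance is below $c_{\max}$. Some voter's balance in that total can be large only if the others are small. I would first try to prove directly that the sum over $V$ of (payment of $i$)/(balance cap of $i$) is at most $\sum_i (n_V-i)/c_{\max}$ per unit time, mimicking the MES worst case. If that fails, I would bound satisfaction by counting resets: each voter accruing $\tau^\ast$ credit over time while being reset whenever the group's total hits $c_{\max}$ forces many resets overall.
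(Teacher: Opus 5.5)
\paragraph{Verdict: there is a genuine gap, in Step 2.} Your Step 1 agrees with the paper's first bullet. The early-stopping worry is harmless: the rule cannot stop before total accrued credit exceeds $\bud$. So $V$ accrues more than $n_V\bud/n\ge\cof{T}$, ends with balance at most $c_{\max}$, and spends at least $\cof{T}-c_{\max}$. The problem is Step 2, which carries the whole proof. In the form you propose it is false for Phragm\'en.

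\emph{Why the fixed-ordering cap fails.} Unlike \mes, Phragm\'en voters are reset and then re-accumulate, so there is no depletion order. A single member of $V$ can pay almost $c_{\max}$ in one purchase. Take all projects of $T$ to cost $c_{\max}$, and let the other members of $V$ be reset again and again by tiny-cost projects outside $T$. One member then accumulates almost $c_{\max}$ while the group total stays below $c_{\max}$, and spends it on a project only she approves. Different members can play this role in different phases, so no fixed ordering with caps $c_{\max}/(n_V-i)$ exists. Only an amortized statement can hold.

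\emph{Why your $\Phi$ does not rescue it.} Your $\Phi$ is just total satisfaction, and it supplies no amortization. The facts that each purchase uses at most $c_{\max}$ of $V$'s money and raises $\Phi$ by at least one give only average satisfaction about $(\cof{T}-c_{\max})/(n_V c_{\max})$. That is a factor of about $(n_V+1)/2$ too weak. Step 3 also presupposes both the per-voter caps and an equal split of spending across $V$, and neither is available here.

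\paragraph{The missing idea.} The paper closes this step with Skowron's variance potential $\phi=\sum_{i\in V}(p(i)-\bar p)^2$. Here $p(i)$ is voter $i$'s current balance and $\bar p$ is the mean over $V$. Three facts drive it:
\begin{itemize}
\item All voters accrue credit at the same rate, so $\phi$ is unchanged between purchases.
\item $\phi$ starts at $0$ and stays non-negative.
\item Resetting a voter $j$ with balance $p_j$ changes $\phi$ by $p_j\bigl(2\bar p-p_j\frac{n_V+1}{n_V}\bigr)$.
\end{itemize}
While some project of $T$ is unfunded, $\bar p\le c_{\max}/n_V$, which is your Step 1 observation. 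Summing over all reset events $e$ of members of $V$ (simultaneous resets processed one at a time) gives $\sum_e p_e^2\le\frac{2c_{\max}}{n_V+1}\sum_e p_e$. By Cauchy--Schwarz, the average payment per reset event is therefore at most $\frac{2c_{\max}}{n_V+1}$. The number of such events is at most $\sum_{i\in V}|A_i\cap W|$. It is therefore at least $\frac{(n_V+1)(\cof{T}-c_{\max})}{2c_{\max}}$, and dividing by $n_V$ gives the theorem.

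This total is exactly what your Step 3 sum evaluates to, $\frac{(n_V+1)(n_V\tau^\ast-c_{\max})}{2c_{\max}}$ before dividing by $n_V$. So your target was right. What you lacked was a potential that ignores uniform accrual and whose change at a reset is controlled by the payment.
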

\begin{proof}
    Given a participatory budgeting instance $I$, let $W$ be the set of projects selected by the {\sc Phragmén} rule. For the sake of contradiction, assume that there exists a $T$-cohesive group of voters $V$ with $|T| \geq \frac{1}{2} \cdot \left(\frac{\cof{T}}{\max_{t \in T} \cof{t}} - 1 \right)$ and  average satisfaction $\text{avg}_W(V) < \frac{1}{2} \cdot \left(\frac{\cof{T}}{\max_{t \in T} \cof{t}} - 1 \right)$. To prove the desired bound, we consider the ratio between the total amount of credits elicited from voters in $V$ during the execution of the {\sc Phragmén} rule over an upper bound on the average individual spending of a voter from $V$. We calculate separately these two quantities:
    \begin{itemize}
        \item[-] \textit{Total credits spent by $V$}: Consider the time $\tau_\ell$ when the rule terminates; this is the time where a project $t_\ell$ was about to be selected, but its selection would overshoot the budget. Let $c$ be the amount of virtual credits earned by all voters up to time $\tau_\ell$. We have:
        \begin{equation}\label{ineq: total-credits}
            c = \tau_\ell \cdot n = \cof{W} + \cof{t_\ell} + x > B
        \end{equation}
        where $x \geq 0$ is the amount of credits that non-supporters of $t_\ell$ (i.e., any voter $i$ such that $t_\ell \notin A_i$) earned in the meantime. The voters in $V$ together have earned $\frac{|V|}{n}\cdot c$ credits and since $V$ is a $T$-cohesive group, it holds that
        \begin{equation}
            \frac{|V|}{n}\cdot c \geq \frac{c \cdot \cof{T}}{\bud} \geq \cof{T},
        \end{equation}
        where the second inequality holds from (\ref{ineq: total-credits}). By assumption, the amount of credits left to $V$ after all the projects are selected is at most $\max_{t \in T} \cof{t}$. Indeed, assume otherwise; then these credits would have been spent earlier for selecting an additional project approved by all the voters from $V$, and such a project would exist by assumption. Thus the voters from $V$ will spent at least $\cof{T} - \max_{t \in T}\cof{t}$ for supporting the projects that they approve.
      
        \item[-] \textit{Average spending by voters in $V$}: Similarly to \citep{skowron2021proportionality} for a time $\tau$ we define the potential value $\phi_\tau$ as:
        \[
        \phi_\tau = \sum_{i \in V}(p_\tau(i) - p_\tau(V))^2
        \]
        where $p_t(i)$ denotes the number of credits held by voter $i \in V$ at time $\tau$ and $p_\tau(V) = \frac{1}{|V|}\sum_{i \in V}p_\tau(i)$ denotes the average amount of credits of the voters in $V$. Now let $\Delta_{\phi}$ be the change in the potential value when a voter $j \in V$ pays for a project, i.e., when the amount of her credits becomes $0$. From \citep{skowron2021proportionality} we know that
        \begin{equation}\label{eq: potential-change}
            \Delta_{\phi} = p_\tau(j)\left(2p_\tau(V) - \frac{p_\tau(j)(|V| + 1)}{|V|}\right)
        \end{equation}
        Furthermore, in each time $\tau$ we have that $p_\tau(V) \leq \frac{\max_{t \in T}\cof{t}}{|V|}$ as otherwise the sum of credits within the group would be greater than $\max_{t \in T}\cof{t}$, and such credits would be earlier spent on buying a project who is approved by all the voters within the group. 

        Next let $x_{\tau,j} = p_\tau(j) - \frac{2|V|}{|V| + 1} \frac{\max_{t \in T}\cof{t}}{|V|}$. Then from equation (\ref{eq: potential-change}) we have:
        \[
        \Delta_{\phi} = p_\tau(j) \left(2p_t(V) - 2\cdot \frac{\max_{t \in T}\cof{t}}{|V|} - \frac{x_{\tau,j}(|V| + 1)}{|V|}\cdot p_\tau(j) \right)
        \]
        If $x_{\tau,j} > 0$, then $\phi$ decreases by at least $2 |x_{\tau,j}| \cdot \frac{\max_{t \in T}\cof{t}}{|V|}$. Similarly, if $x_{\tau,j} \leq 0$, then $\phi$ increases by at most $2 |x_{\tau,j}| \cdot \frac{\max_{t \in T}\cof{t}}{|V|}$. Since the potential value is always non-negative, we infer that the values of $x_{\tau,j}$ for $j \in V$ and $\tau$ are on average at most $0$. Thus, the voters from $V$ on average pay at most $\frac{2|V|}{|V| + 1}\cdot \frac{\max_{t \in T}\cof{t}}{|V|} < \frac{2\max_{t \in T}\cof{t}}{|V|}$ for a project.
    \end{itemize}
    Consequently, the average number of projects that the voters approve equals at least:
    \[
    \frac{\cof{T} - \max_{t \in T}\cof{t}}{\frac{2\max_{t \in T}\cof{t}}{|V|} \cdot |V|} = \frac{\cof{T} - \max_{t \in T}\cof{t}}{2\cdot \max_{t \in T}\cof{t}}
    \]
This concludes the proof. 
\end{proof}

Next we provide a matching up to an additive constant upper bound for the proportionality degree of Phragmén Sequential Rule.

\begin{theorem}
    Let $T \subseteq \proj$, for each $T$-cohesive group of voters the proportionality degree of the {\sc Phragmén} rule satisfies $d_{\textit{Phrag}}(T) \leq \frac{1}{2} \cdot \left(\frac{\cof{T}}{\max_{t \in T} \cof{t}}  \right)$
\end{theorem}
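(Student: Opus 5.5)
The upper bound needs an explicit instance, a set $T$ and a $T$-cohesive group $V$ whose average satisfaction under {\sc Phragmén} is at most roughly $\frac{1}{2}\cdot\frac{\cof{T}}{\max_{t \in T}\cof{t}}$, possibly up to vanishing terms. I would adapt the instance of \cref{fig: upper-bound} used for the \mes upper bound. All projects have a common cost $c$, so $\cof{T}/\max_{t\in T}\cof{t} = |T| =: k$. Let $V=\{v_0,\dots,v_{n_V-1}\}$ be $T$-cohesive with $\bud/n = kc/n_V$ exactly.

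For $j\in\{0,\dots,n_V-2\}$ I attach to $v_j$ an outside group $Q_j$ with $|Q_j\cap V|=\{v_j\}$. The group $Q_j$ approves a large private set $Z_j$ of projects, and its size is tuned so that, under the continuous-time credit dynamics, $v_j$'s credits are repeatedly drained on projects of $Z_j$. The aim is to reproduce the payment profile from \cref{lemma: maximum-payment}: voter $v_i$ in the ordering effectively pays about $c/(n_V-i)$ per $T$-project it receives.

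The key steps, in order, are as follows.
\begin{enumerate}
\item Use the Phragmén timeline. The rule stops at some time $\tau^\ast\approx \bud/n$ per voter (rescaling rates), and each voter earns about $\bud/n = kc/n_V$ credits in total.
\item Arrange the sizes and approvals of the $Q_j$ (for instance $|Q_j| = n_V-j$, with tie-breaking in favour of $Z_j$) so that at every moment the credits of $v_0,\dots,v_{j-1}$ have already been consumed by their $Z$-projects. Then a $T$-project bought later is paid only by the remaining $n_V-j$ members of $V$, each contributing about $c/(n_V-j)$.
\item Count $v_i$'s satisfaction as roughly (credits spent on $T$) $/$ (per-project share) $\le \frac{kc/n_V}{c/(n_V-i)}+1$.
\item Average over $i$, exactly as in the \mes computation: $\frac{1}{n_V}\sum_i\bigl(\frac{k(n_V-i)}{n_V}+1\bigr)=\frac{k(n_V+1)}{2n_V}+1$.
\end{enumerate}
The "$+1$" rounding terms need removing to reach the stated bound $\frac12 k$ exactly. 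I would do this by letting the $Z_j$ absorb the fractional remainders, so each $v_i$ gets only $\lfloor\cdot\rfloor$ projects from $T$. I would then let $n_V\to\infty$, so the $\frac{1}{2n_V}$ term vanishes in the supremum defining $d_{\textit{Phrag}}$.

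The main obstacle is step 2: making the Phragmén dynamics actually produce this staggered draining. Unlike \mes, payments are not capped by an affordability level. Voters in $V$ accumulate credits continuously, and a $T$-project is bought as soon as $V$ collectively holds $c$ credits. So the $Q_j$ must spend $v_j$'s credits just before $V$'s pooled credits reach $c$.

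I would first try making each $Z_j$ a stream of cost-$c$ projects with $|Q_j|$ chosen so that $Q_j$ completes a purchase at each time $V$ would otherwise complete one, resolving simultaneous completions by tie-breaking in favour of $Z_j$. If exact synchronisation fails, I would fall back to the potential-function picture from the lower bound. That argument shows the average payment per $T$-project is at most $\frac{2c}{n_V+1}$ per voter, and the instance should saturate it: each voter holds credits about $\frac{2c}{n_V+1}$ when its balance is reset. This yields average satisfaction $\approx \frac{kc/n_V}{2c/(n_V+1)}\to \frac{k}{2}$.
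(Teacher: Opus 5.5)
\textbf{There is a genuine gap: you never exhibit an instance on which Phragm\'{e}n behaves as required, and the \mes-style instance you propose does not work.}

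The staggered ordering of \cref{lemma: maximum-payment} relies on a one-time budget $\bud/n$ per voter, so that $v_0,v_1,\dots$ drop out for good one after another. Under Phragm\'{e}n nothing is ever exhausted. A voter re-earns credits at rate $1$ immediately after each reset, so the pooled balance of $V$ grows at rate $n_V$ no matter who was drained, and the invariant of your step~2 cannot hold.

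Concretely, take $|Q_j|=n_V-j$ with ties in favour of $Z_j$. A $Z_0$-project is bought at time $c/n_V$ and $v_0$ is reset. But $V$ is back at $c$ credits at time $c(n_V+1)/n_V^2<c/(n_V-1)$, before $Q_1$ can fire, so a $T$-project is bought. More generally, if $v_j$ is drained only by $Q_j$, it carries on average $c/(2(n_V-j))$ credits. Then $V$'s pool sits around $\frac{c}{2}\sum_{m=2}^{n_V}\frac{1}{m}>c$ for large $n_V$, and $T$-projects are bought again and again.

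This breaks steps~3--4. The rule runs at least until time $\bud/n$, so $v_j$ already receives about $k(n_V-j)/n_V$ projects from $Z_j$ alone. That is exactly your $\frac{kc/n_V}{c/(n_V-j)}$, which averages to $k/2$. On top of this, every $T$-purchase adds $+1$ to \emph{every} member of $V$, including those whose balance was nearly $0$. Under Phragm\'{e}n a voter's payment is just its time since the last reset, so there is no lower bound of $c/(n_V-i)$ on the price of a unit of satisfaction. The ``$+1$'' fix via $Z_j$ absorbing remainders does not address this. To reach $k/2$, no project of $T$ may be funded at all.

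Your fallback names the right target: members of $V$ should be reset holding about $2c/n_V$, which saturates the potential argument. But that is the conclusion, not a construction. The missing idea, which the paper takes from \citet{skowron2021proportionality}, is a round-robin drain of $V$ by projects that only a fraction of $V$ approves.
\begin{itemize}
\item Split $V=\{1,\dots,R\}$ into $x$ blocks of size $R/x$. Introduce projects $b_j$, each approved by one block of $V$ (cyclically shifted) together with outside voters whose number is tuned so that $b_j$ becomes affordable exactly at time $j\tau$.
\item At each such time the blocks hold $\tau,2\tau,\dots,x\tau$ credits per voter, so $V$'s pool equals $\frac{R}{x}\tau\frac{x(x+1)}{2}=c$. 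With ties broken against $T$, the block holding $x\tau\approx 2c/R$ is reset instead of a $T$-project being bought. Hence no project of $T$ is ever selected, and each selected $b_j$ satisfies exactly $R/x$ members of $V$.
\item A second family of projects ($Z_2$) absorbs the credits of the remaining voters, so total spending equals total earning. This pins down the stopping time, which your step~1 merely assumes, so $V$ earns only about $\cof{T}$.
\end{itemize}
The average satisfaction of $V$ is then $\frac{1}{x}$ times the number of selected $b$-projects, i.e.\ $\frac{k}{2}(1+o(1))$ as $\bud/\cof{T}\to\infty$, with no per-voter rounding term to remove.
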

\begin{proof}
We construct the following instance on which we will upper bound the minimum average satisfaction over $T$-cohesive groups. We will construct three disjoint sets of projects, namely:
\[
  Z_1 = \{b_1, \ldots, b_{m_1}\}, \ \ \ \ 
  Z_2 = \{c_1, \ldots, c_{m_2}\}, \ \ \ \ 
  T = \{d_1, \ldots, d_{m_T}\}, \ \ \ \ 
\]
and therefore $\proj = Z_1 \cup Z_2 \cup T$. Also, $\cof{T}$ will be such that $\cof{T} < B/2$ and $B$ is divisible by $\cof{T}$, i.e., there exists an $\alpha \in \mathbb{Z}$ such that $B = \alpha \cdot \cof{T}$. Also, let 
$$x = \frac{2 \cdot (\bud -\cof{T})}{\cof{T}} - 1$$ 
and let $R \in \mathbb{Z}$ divisible by $x$, i.e., there exists  $\beta \in \mathbb{Z}$, such that $R = \beta \cdot x$. Note that $x$ is an integer, since it holds that $x = 2\alpha - 2$. For every project $t' \in Z_1 \cup Z_2$, we have $cost(t') = \min_{t \in T}\cof{T}$.

Let the number of voters be $n = \frac{R\cdot \bud}{\cof{T}}$. The number of voters $n$ will also be equal to $\min_{t}\cof{T}$ (and hence $\min_{t\in T}\cof{T} \in \mathbb{Z}$). Note that this possible: for every $\alpha \in \mathbb{Z}$, we can choose $\beta$ such that $(2\alpha \beta - 2\beta)\alpha = \min_{t \in T}\cof{T}$; we leave $\alpha$ to be chosen later in the proof.

In the following, we define the approval preferences of the voters. For convenience, instead of defining the approval set $A_i$ of every voter $i \in N$, we will define $A(t)$ to be the set of voters that approve project $t$, and we will define $A(t)$ for each $t \in \proj$. First, let 
\[
C_R=\{R+1, \ldots, \min_{t \in T}\cof{t} - R\}
\]
First, for $j \in [1,x]$, define
\[
A(b_j) = \{(j-1)\cdot R/x+1,(j-1)\cdot R/x+2,\ldots,j\cdot R/x\} \cup \{R+1,R+2,\ldots \min_{t\in T} \cof{T} - j\cdot R/x\}.
\]
Then, for $j \in [x+1,\ldots, m_1\}$, we define
\begin{align*}
&A(b_{x+1}) = \{1,2,\ldots, R/x\} \cup C_R \\
&A(b_{x+2}) = \{R/x+1, \ldots, 2R/x\} \cup C_R \\
& \vdots \\
& A(b_{m_1}) = (\text{a ``cyclic shift'' of} \{1,2,\cdots, R/x\}) \cup C_R\}
\end{align*}
Then, for each $j \in \{1,\ldots,m_2\}$, let 
\[
A(c_j) = \{\min_{t\in T}\cof{t}-R, \min_{t \in T}\cof{T}-R+1, \ldots, \min_{t \in T}\cof{T}\}
\]
Then, for each $j \in \{1,\ldots,m_3\}$, let 
\[A(d_j) = \{1, \ldots, R\}.\]
    
Thus, 
\begin{itemize}
    \item[-] project $b_1$ is approved by $\min_{t \in T}\cof{t} - R$ voters in total, 
    \item[-] project $b_2$ is approved by $\min_{t \in T}\cof{t} - R - R/x$ voters in total,  
    \item[-] project $b_3$ by $\min_{t \in T}\cof{t} - R - 2R/x$ voters in total, 
    \item[-] each project from $\{b_x, \ldots, b_k\}$ is approved by the same $\min_{t \in T}\cof{t} - 2R$ voters from $\{R+1, \ldots, \min_{t \in T}\cof{t} - R\}$ and by some $R/x$ voters from $\{1, \ldots, R\}$, which cyclically shift, 
\end{itemize} 
The voters from $\{1, \ldots, R\}$ who approve $b_i$, for $i \geq x$, are those who are right after the voters who approve the $(b_{i-1})$-th project, unless those who approve $b_{i-1}$ from the last segment are $\{R - R/x + 1, \ldots, R\}$. In such a case, the voters from $\{1, \ldots, R\}$ who approve $b_i$ are exactly $\{1, 2, \ldots, R/x\}$. Furthermore, we can see that the group of voters $\{1, \ldots, R\}$ form a $T$-cohesive group.\medskip

\noindent Now, consider all the voters approving $b_1$. Let $\tau = \frac{\bud}{\bud -\cof{T}}$; this ensures that
\[\tau \cdot \left(\min_{t\in T}\cof{t} - R\right) = \frac{\bud}{\bud -\cof{T}} \cdot \left(\min_{t \in T}\cof{t} - \min_{t \in T}\cof{t} \cdot \frac{\cof{T}}{\bud} \right) = \min_{t \in T}\cof{t}.\]

Thus, \textsc{Phragmén’s} Sequential Rule selects $b_1$ first at time $\tau$ \footnote{We assume adversarial tie-breaking against the projects in $T$.}. Indeed, at this time the group of voters $A(b_1)$ collects $\tau \cdot (\min_{t \in T}\cof{t} - R) = \min_{t \in T}\cof{t}$ credits. At time $2\tau$, each voter from $\{R/x + 1, \ldots, 2R/x\}$ has already $2\tau$ credits, and each voter from $\{R + 1, R + 2, \ldots, \min_{t \in T}\cof{t} - 2R/x\}$ has $\tau$ credits; altogether, they have $\min_{t \in T}\cof{t}$ credits, thus $b_2$ is selected second. By a similar reasoning, we infer that in the first $x$ steps candidates $b_1, \ldots, b_x$ will be selected by \textsc{Phragm\'{e}n’s} Sequential Rule, and the last one of them will be selected at time $x\cdot \tau$.

At this time the rule would also select one candidate from $\{c_1, \ldots, c_{m_2}\}$. Indeed, at time $x\cdot \tau$ the voters from $\min_{t \in T}\cof{t} - R, \min_{t \in T}\cof{t} - R + 1, \ldots, \min_{t \in T}\cof{t}$ have the following number of credits:
\begin{align*}
    \frac{R}{x} \cdot \tau \cdot (1 + 2 + \ldots + x) &= \frac{R}{x} \cdot \tau \cdot \frac{x(x + 1)}{2} = \frac{R \cdot \tau \cdot (x + 1)}{2} \\
    &= \frac{n \cdot \cof{T} \cdot \tau \cdot (x + 1)}{2 B} = \frac{\min_{t\in T}\cof{t} \cdot \cof{T} \cdot \tau \cdot (x + 1)}{2 B} \\
    &= \min_{t\in T}\cof{t}
\end{align*}
In the last step we substitute $\tau$ and $x$ and the terms cancel out accordingly.

Let us now analyze what happens before the budget $\bud$ is depleted. First, let us consider how the \textsc{Phragm\'{e}n}’s Sequential Rule would behave if there were no candidates from $\{c_1, \ldots, c_{m_2}\}$. At time $(x+1)\cdot \tau$ voters from $\{1, 2, \ldots, R/x\}$ have $x\cdot \tau$ credits each. Similarly, each voter from $\{R/x+1, \ldots, 2R/x\}$ has $(x-1)\tau$ credits, each voter from $\{2R/x + 1, \ldots, 3R/x\}$ has $(x - 2)\tau$ credits, etc. The amount of credits held by the voters from $\{1, 2, \ldots, R\}$ altogether at time $(x + 1)\tau$ is:
\[
\frac{R}{x} \cdot \tau \cdot (1 + 2 + \ldots + x) = \min_{t \in T}\cof{t}
\]

At the same time voters from $A(b_{x+1})$ have $\min_{t \in T}\cof{t}$ credits altogether, thus, $b_{x+1}$ can be selected next, before any project from $D = \{d_1, \ldots, d_{m_T}\}$ is chosen, since we assumed adversarial tie-breaking. Similarly, at time $(x + 2)\tau$ each voter from $\{1, 2, \ldots, R/x\}$ has $\tau$ credits, each from $\{R/x + 1, \ldots, 2R/x\}$ has $x\cdot \tau$ credits, each from $\{2R/x + 1, \ldots, 3R/x\}$ has $(x - 1)\tau$ credits, etc.; altogether they have at most $\min_{t \in T}\cof{t}$ credits, and the voters from $A(b_{x+2})$ have exactly $\min_{t \in T}\cof{t}$; thus $b_{x+2}$ can be selected next. Through a similar reasoning we conclude that before the budget $\bud$ is depleted, \textsc{Phragm\'{e}n}’s Sequential Rule would select projects $\{b_1, \ldots, b_{m_1}\}$.

Now, consider the projects from $\{c_1, \ldots, c_{m_2}\}$. After projects $b_1, \ldots, b_x$ are selected, projects from $\{c_1, \ldots, c_{m_2}$ are approved only by voters who do not approve other remaining projects, except the voter indexed by $\min_{t \in T}\cof{t} - R$ who approves all the projects $b_x, \ldots, b_{m_1}$. Thus, their selection does not interfere with the relative order of selecting the other projects. Further, observe that over time $(x + 1)\cdot \tau$ the last $R$ voters collect the following number of credits:
\[
(x + 1)\cdot \tau \cdot R = \frac{2(\bud -\cof{T})}{\cof{T}} \cdot \frac{\bud}{\bud -\cof{t}} \cdot \min_{t \in T}\cof{t} \cdot \frac{\cof{T}}{\bud} = 2\cdot \min_{t \in T}\cof{t}
\]

Thus, every at every $\left(\frac{(x + 1)}{2}\tau\right)$-th time step, the rule will select one project from $\{c_1, \ldots, c_{m_2}\}$. Next we show that in the first 
$$\tau\cdot \left(\frac{\bud}{\min_{t \in T}\cof{t}} - \frac{\frac{2\bud}{\min_{t \in T}\cof{t}} - 2x}{x + 3}\right)$$ time steps, the rule will select at least $\left\lfloor \frac{\frac{2\bud}{\min_{t \in T}\cof{t}} - 2x}{x + 3} \right\rfloor + 1$ projects from $\{c_1, \ldots, c_{m_2}\}$. \medskip 

\noindent Indeed, the first project will be selected after the first $x\cdot \tau$ time steps. After the remaining 
$$\tau \cdot \left(\frac{\bud}{\min_{t \in T}\cof{t}} - x - \left\lfloor \frac{\frac{2\bud}{\min_{t \in T} \cof{t}} - 2x}{x + 3} \right\rfloor\right)$$ 
time steps the number of projects from $\{c_1, \ldots, c_{m_2}\}$ that will be selected is equal to:
\begin{align*}
    \frac{\tau\cdot \left(\frac{\bud}{\min_{t \in T}\cof{t}} - x - \left\lfloor \frac{\frac{2\bud}{\min_{t \in T}\cof{t}} - 2x}{x + 3} \right\rfloor\right)}{\frac{(x+1)t}{2}} &\geq \frac{2\cdot\left(\frac{\bud}{\min_{t \in T}\cof{t}} - x - \frac{\frac{2\bud}{\min_{t \in T}\cof{t}} - 2x}{x + 3}\right)}{x + 1} \\
    &= \frac{\frac{2\bud}{\min_{t \in T}\cof{t}} - 2x}{x + 3} > \left\lfloor \frac{\frac{2\bud}{\min_{t \in T}\cof{t}} - 2x}{x + 3} \right\rfloor
\end{align*}

Next, observe that:
\begin{align*}
    \frac{\frac{2\bud}{\min_{t \in T}\cof{t}} - 2x}{x + 3} + 1 &\geq \left\lfloor \frac{\frac{2\bud}{\min_{t \in T}\cof{t}} - \frac{4\bud -6\cof{T}}{\cof{T}}}{\frac{2\bud -3\cof{T}}{\cof{T}} + 3} \right\rfloor + 1 \\
    &= \left\lfloor \frac{\frac{2\bud}{\min_{t \in T}\cof{t}}\cdot \cof{T} - 4B + 6\cdot \cof{T}}{2\bud} \right\rfloor + 1 \\
    &\geq \frac{\cof{T}}{\min_{t \in T}\cof{t}} - 1 \geq |T| -1
\end{align*}

Consequently, the budget spent on projects from $\{b_1, \ldots, b_{m_{T}}\}$ is at most $\bud - (|T| - 1)\cdot \min_{t \in T}\cof{t} $ and the rest of the budget is spent on projects from $\{c_1, \ldots, c_{m_2}\}$.

The group of voters $V = \{1, \ldots, R\}$ is $T$-cohesive; we will next assess their average number of representatives. Observe that, except for projects from $\{c_1, \ldots , c_{m_2}\}$, each project from the selected proposal is approved by exactly $\frac{R}{x}$ voters of $V$. Thus:
\begin{align*}
    \frac{1}{|V|} \sum_{i \in V} |W \cap A_i| &= \frac{1}{R} \cdot \frac{R}{x} \cdot \left(\frac{\bud -(|T| - 1)\cdot \min_{t \in T}\cof{t}}{\min_{t \in T}\cof{t}}\right) \\ 
    &= \frac{\cof{T}}{2\cdot \min_{t \in T}\cof{t}} \cdot \frac{2\bud -2(|T| - 1)\cdot \min_{t \in T}\cof{t}}{2\bud -3\cdot \cof{T}}
\end{align*}
Next, choose $|Z_1|= |Z_2| = \lfloor B\rfloor$, for $\bud$ large enough; by choosing $\alpha$ accordingly, the second term is arbitrarily close $1$. Also, choose $\min_{t \in T}\cof{t}$ such that $\frac{\max_{t \in T}\cof{t}}{\min_{t\in T}\cof{t}} \leq 1 + \delta$ with $\delta \rightarrow 0^{+}$ and we obtain the desired bound.
\end{proof}

\newpage

\section{Experimental Evaluation}
To complement our theoretical analysis, we perform experiments using real-world data from the Pabulib library of \citet{faliszewski2023participatory}. The library contains instances of participatory budgeting with different numbers of voters and projects, costs, as well as voters' approvals for the projects. We select 100 instances from the library and evaluate five different PB rules on those instances, namely (i) \mes (ii) Sequential Phragm\'{e}n (iii) \mes with exhaustion (iv) Phragm\'{e}n with exhaustion and (v) greedy approval. The greedy approval rule \citep{talmon2019framework} sorts the projects according to their cumulative approval scores and allocates the budget according to this order until no more projects can be funded. By ``with exhaustion'', we mean versions of the rules which, at the end of their execution, assuming there is leftover budget and projects to be selected, select the remaining projects using the greedy approval rule. 

Due to its simplicity, the greedy rule is widely used in practice \cite{peters2021proportional,laruelle2021voting}. Given this, it is justified to study the performance of the greedy rule and compare it with that of \mes and Phragm\'{e}n. Next we present  our experimental methodology and results. 

\begin{figure}[h!]
    \centering 
\begin{subfigure}{0.4\textwidth}
  \includegraphics[width=\linewidth]{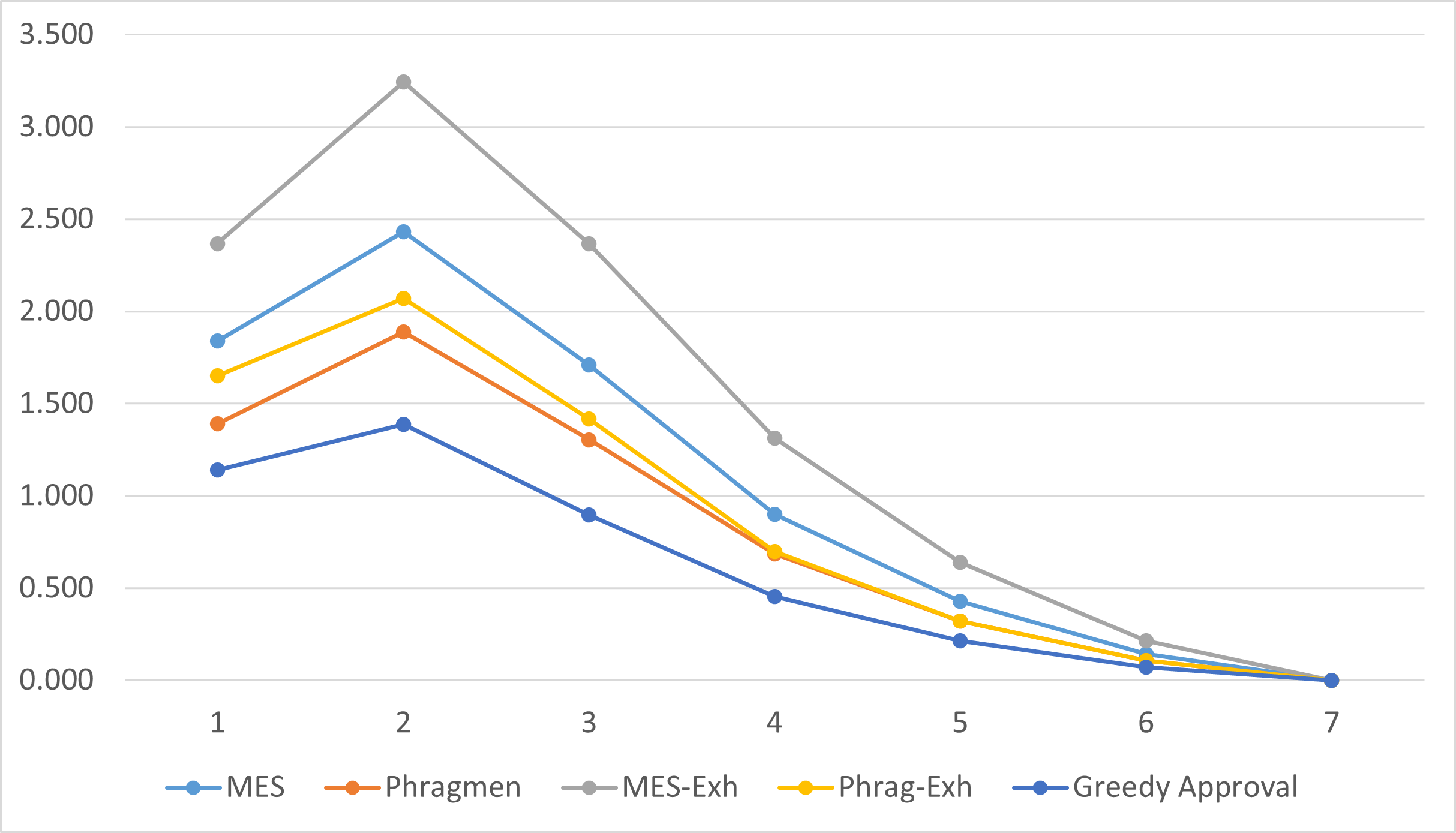}
  \label{fig:1}
\end{subfigure}\hfil 
\begin{subfigure}{0.4\textwidth}
  \includegraphics[width=\linewidth]{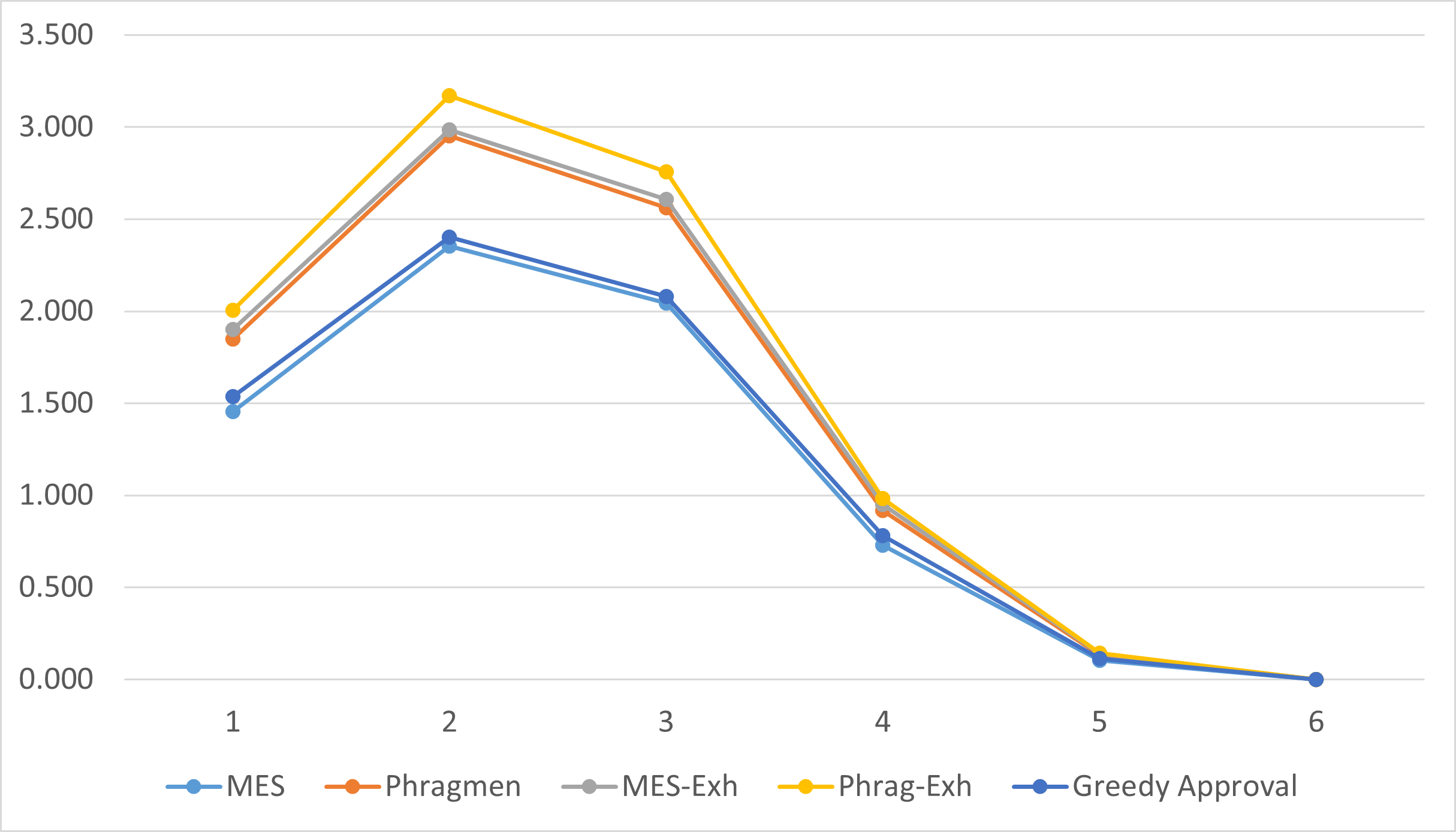}
  \label{fig:2}
\end{subfigure}\hfil 
\begin{subfigure}{0.4\textwidth}
  \includegraphics[width=\linewidth]{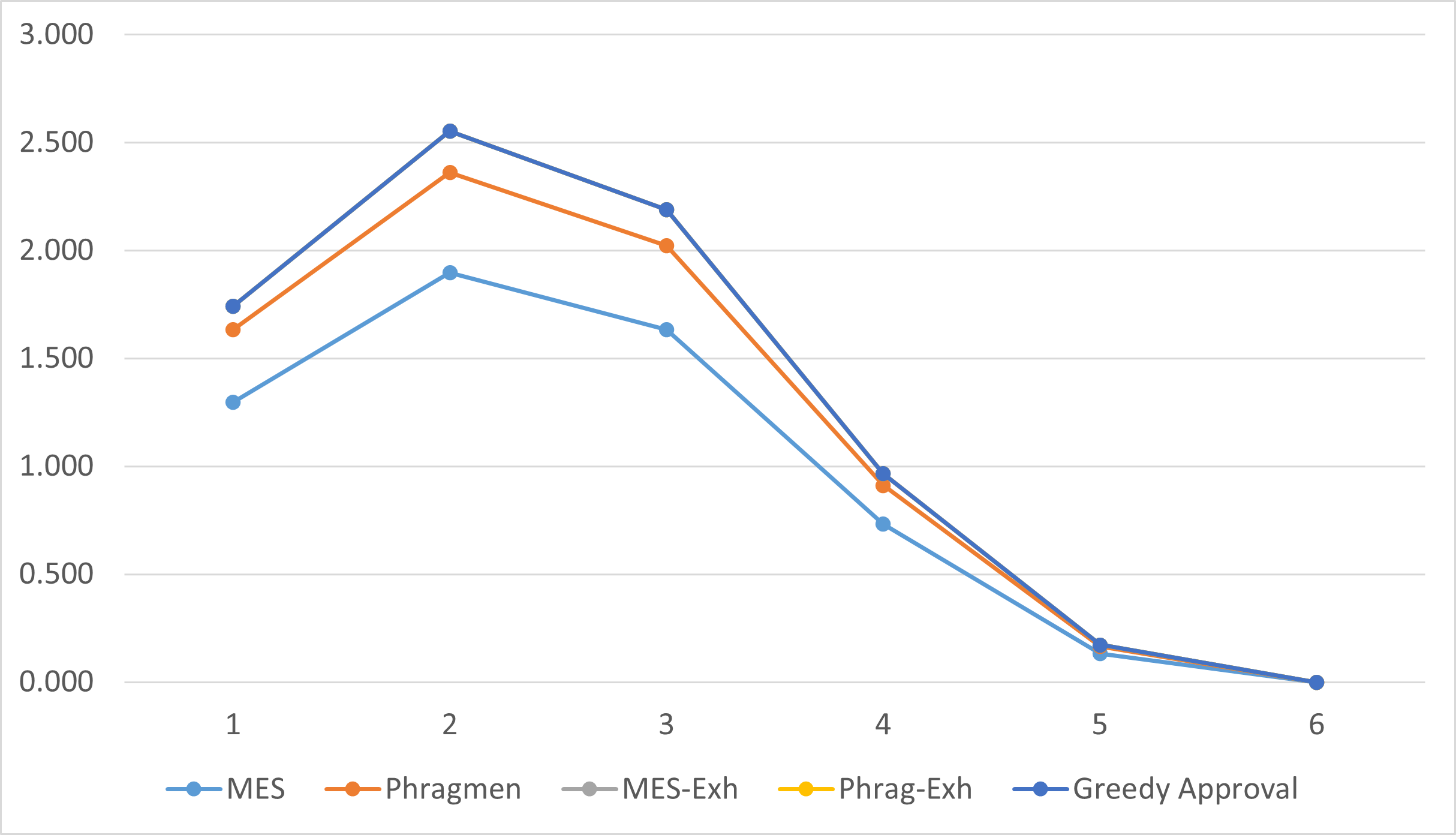}
  \label{fig:3}
\end{subfigure}
\bigskip

\begin{subfigure}{0.4\textwidth}
  \includegraphics[width=\linewidth]{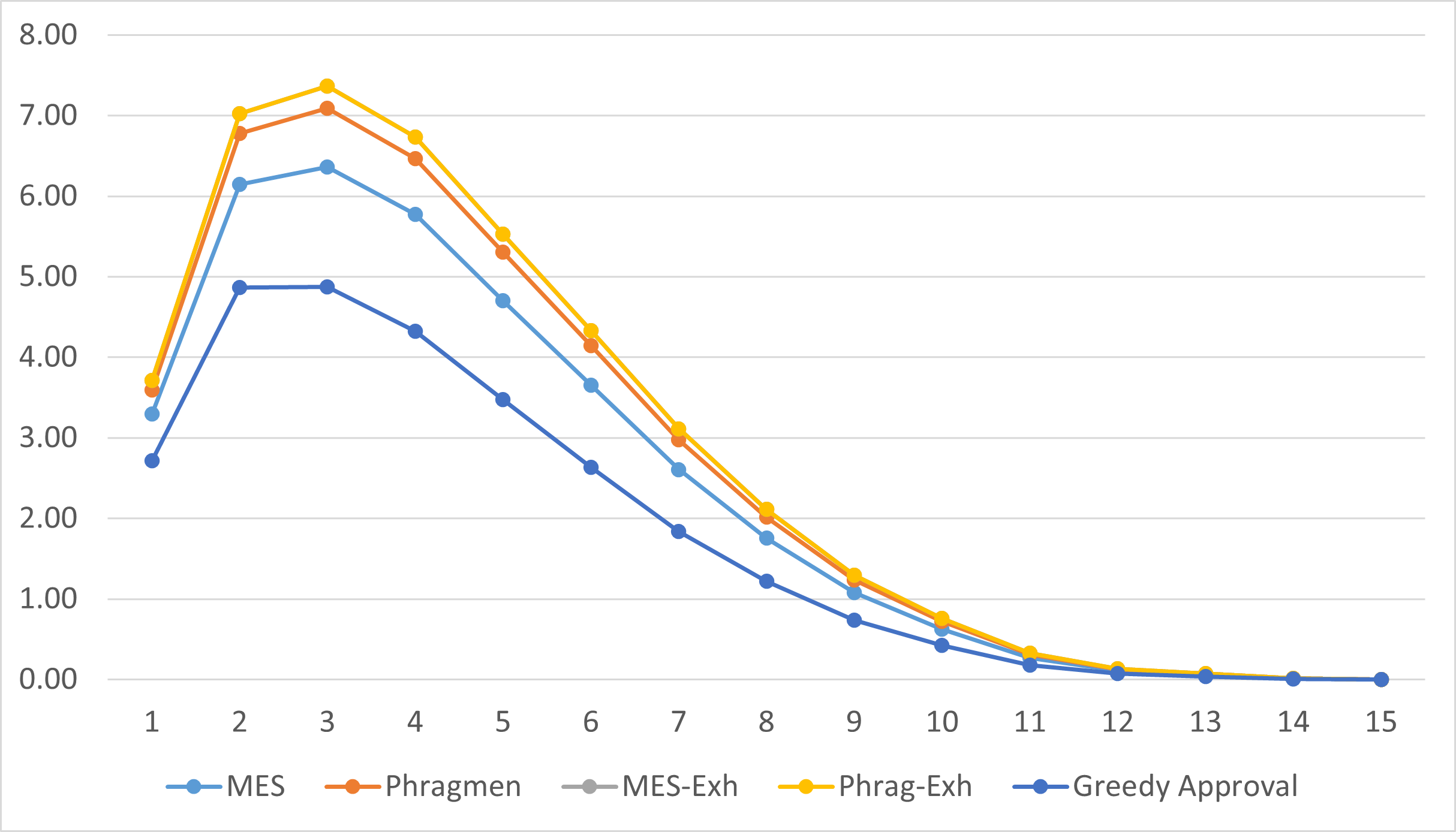}
  \label{fig:4}
\end{subfigure}\hfil 
\begin{subfigure}{0.4\textwidth}
  \includegraphics[width=\linewidth]{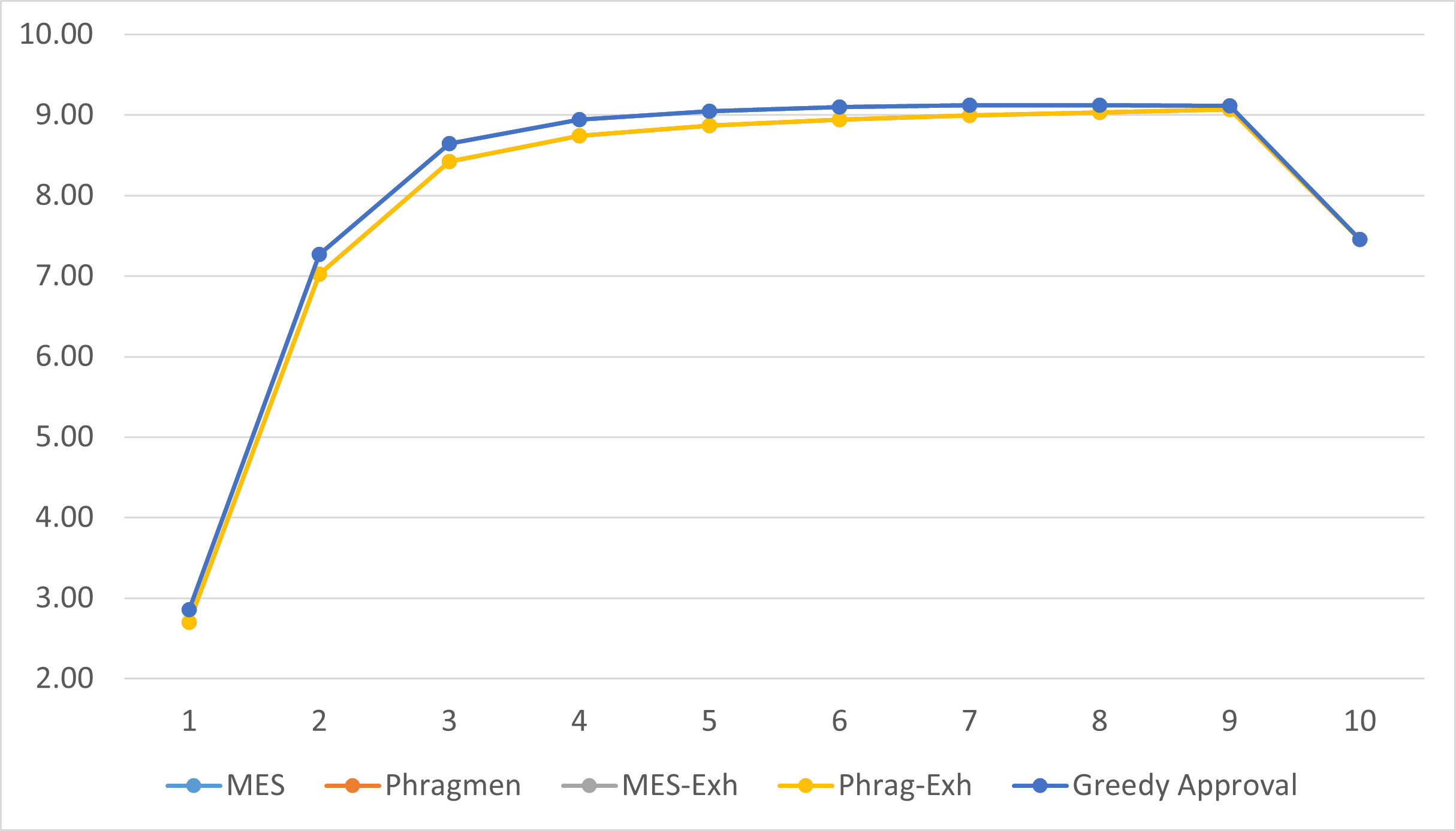}
  \label{fig:5}
\end{subfigure}\hfil 
\caption{The graphs show the results for 5 different representative datasets out of the 100. Note that the suffix `Exh' represents the rule with budget exhaustion. For each dataset, the $X$-axis represents the subset size, while the $Y$-axis represents the average proportionality degree of all the sampled subsets of that size.}
    \label{fig: all_datasets}
\end{figure}

\begin{figure}
    \centering
    \begin{subfigure}{0.5\textwidth}
  \includegraphics[width=\linewidth]{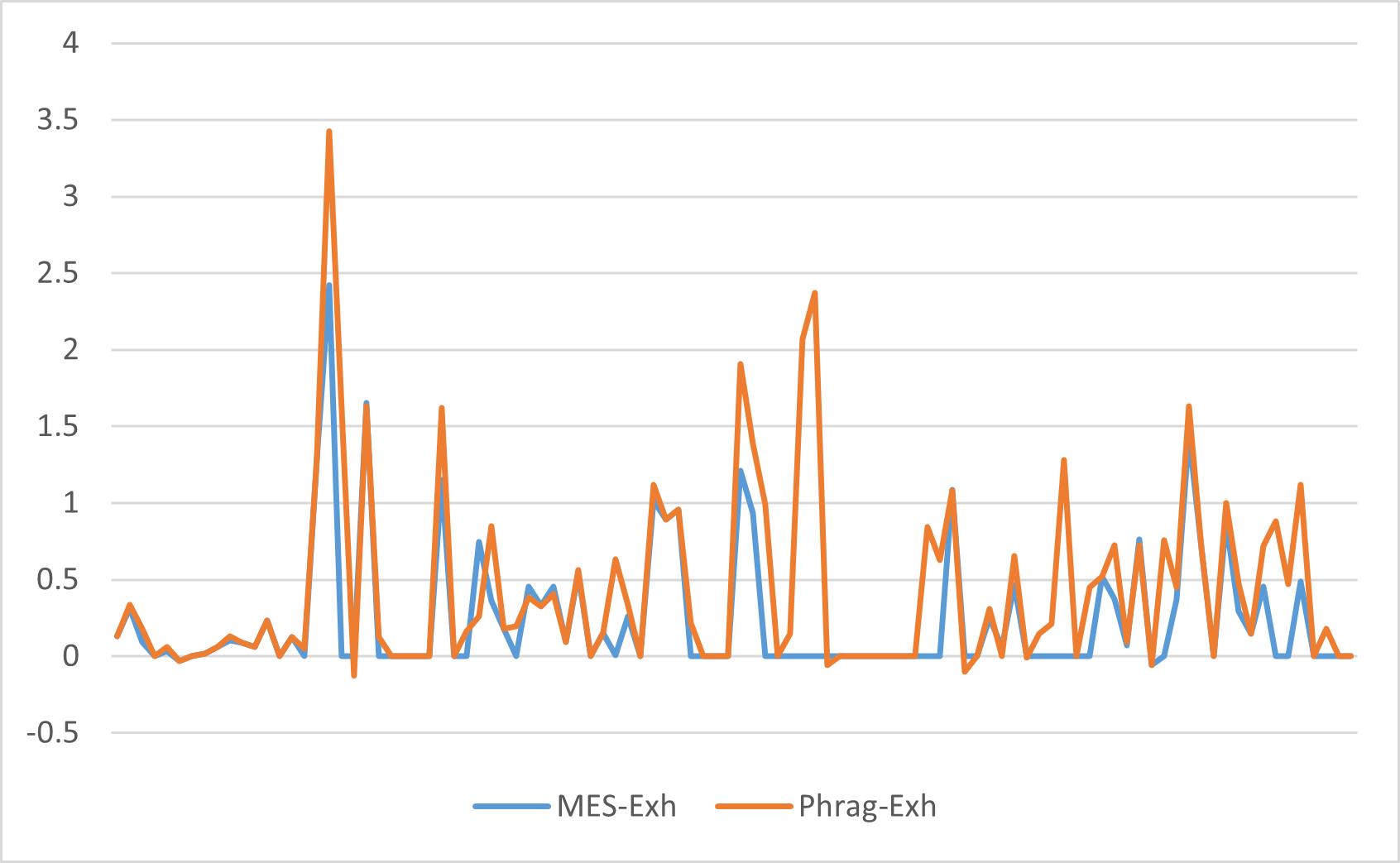}
\end{subfigure}\hfil 
\begin{subfigure}{0.5\textwidth}
  \includegraphics[width=\linewidth]{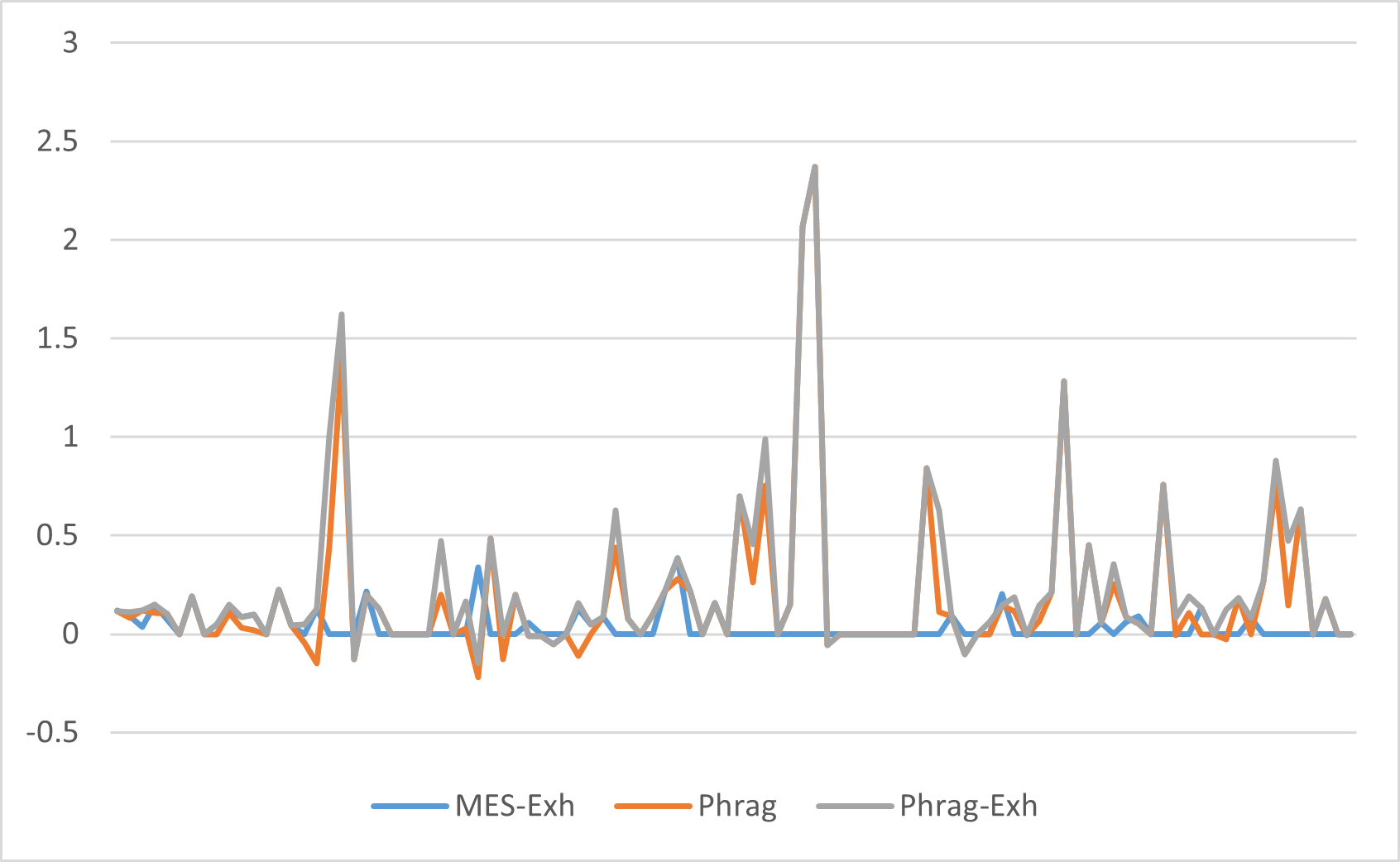}
\end{subfigure}
    \caption{The X-axis corresponds to the dataset number. On the Y-axis, the left graph represents the average proportionality degree of \mes-Exh and Phragm\'{e}n-Exh with greedy approval as the basis (i.e., zero). The right graph represents the average degree of \mes-Exh, Phragm\'{e}n, and Phragm\'{e}n-Exh with \mes as the basis.}
    \label{fig: results_wrt}
\end{figure}

\subsection{Experimental Setup and Methodology}
The first obvious challenge in the design of our experiments is that the number of possible subsets of \proj is exponential in $m$, and hence we cannot possibly calculate proportionality degrees for all of them in reasonable time. To sidestep this, we will use \emph{random sampling}; in particular, for each given subset size $k$, we sample $5000$ sets $T_1 \ldots, T_{5000}$ of $k$ projects uniformly at random from \proj. We consider all possible subset sizes $k \in \{1, \ldots, 15\}$; this is because we observe that for $k > 15$, in almost all datasets, the number of $T$-cohesive groups is $0$. Intuitively, this happens because, as the number of projects in the set increases, the number of voters unanimously agreeing on all of them decreases, and vanishes by the $k=15$ threshold. We then calculate the average proportionality degree over sample subsets $T$ for a given size $k$, and finally take the average of all the possible sizes $k$ to obtain a single representative value for the proportionality degree of each dataset.  

For each of the 100 datasets and each PB rule \pbrule under consideration, we perform the following polynomial time procedure:
\begin{enumerate}
    \item Execute \pbrule on the dataset and compute the satisfaction of each voter from the  outcome of \pbrule.
    \item For each possible subset size $k 
    \in \{1,\ldots,\min(m,15)\}$, we sample exactly $\min\left(\binom{m}{k},5000\right)$ subsets of \proj, each having exactly $k$ distinct projects. The use of the $\min$ operators is because some datasets have fewer than $15$ projects and there could be fewer than $5000$ subsets of projects of a given size $k$.
    \item For each sampled subset $T$, we find the largest set $A(T)$ of voters who approve every project in $T$. Clearly, any $T$-cohesive group of voters must be a subset of $A(T)$.
    \item Note that by \Cref{def: tcohesive}, the cardinality of any $T$-cohesive group of voters has to be at least $n\cdot \cof{T}/\bud$. Thus, we pick exactly $\ceil{n \cdot \cof{T}/{\bud}}$ voters with the least possible satisfaction from $A(T)$ (i.e., the satisfaction of any unpicked voter in $A(T)$ is at least as high as the satisfaction of a picked voter). Clearly, this has to be the $T$-cohesive group of voters with the smallest average satisfaction. Hence, the average satisfaction of this group is the proportionality degree of the sample $T$, $d_{\pbrule}(T)$.
    \item For each subset size $k$, we calculate the average of the proportionality degrees of all the sampled subsets of size $k$, which we denote by $d_k$.
    \item For each dataset, we iterate over all the possible subset sizes up to $\min(m,15)$ and calculate the average proportionality degree of the dataset, i.e., $\frac{\sum_k{d_k}}{\min(m,15)}$.
\end{enumerate}

\subsection{Experimental Results}
First, we illustrate the results for five representative datasets in \Cref{fig: all_datasets} such that the set of rules with the highest average proportionality degree is unique for each of them. We notice that for Dataset 1, \mes-Exh is the (single) best rule, followed by \mes, Phragm\'{e}n-Exh, Phragm\'{e}n, and greedy approval (in that order), whereas for Dataset 2, Phragm\'{e}n-Exh is the (single) best rule, followed by \mes-Exh and Phragm\'{e}n (both very close and \mes-Exh is marginally better), greedy approval, and finally \mes.

\begin{figure}
    \centering
  \scalebox{0.8}{\includegraphics{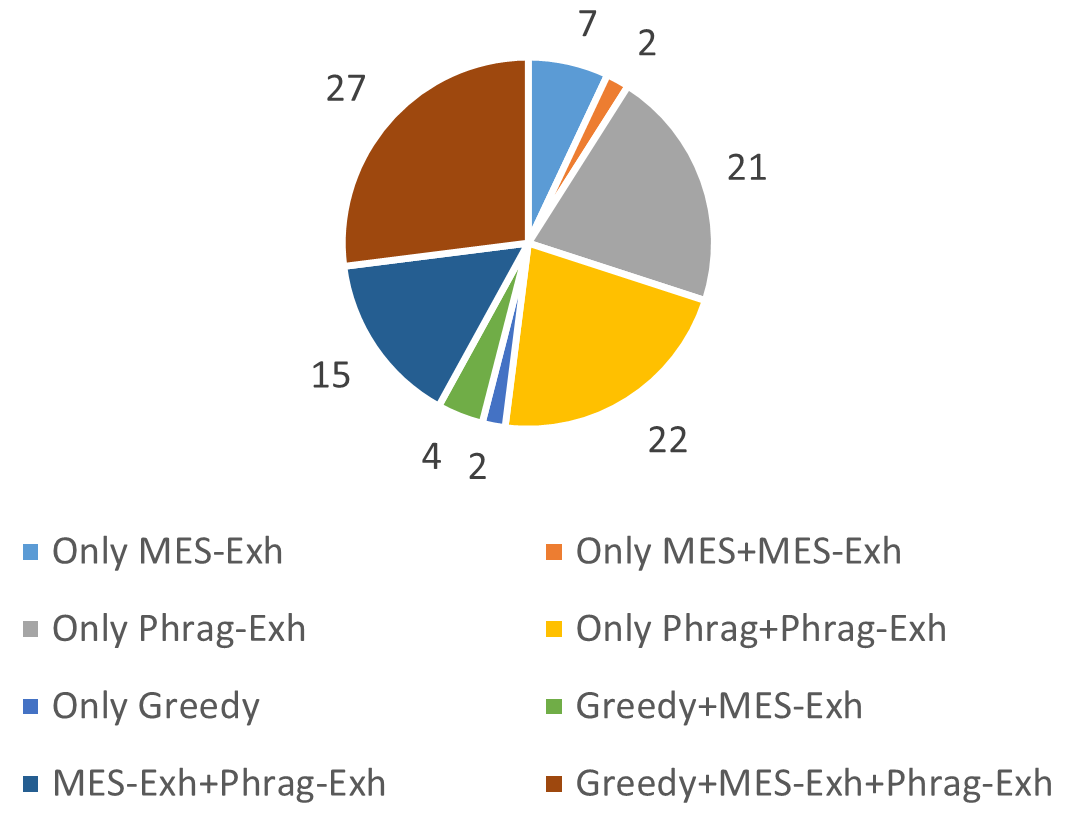}}
  \caption{Pie chart showing the percentage of datasets for which each rule is better than the others.}
  \label{fig: chart}
\end{figure}

Similarly, for Dataset 3, all three rules \mes-Exh, Phragm\'{e}n-Exh, and greedy output the same set of projects. For Dataset 4, both Phragm\'{e}n-Exh and \mes-Exh are best, whereas without exhaustion, Phragm\'{e}n performs much better than \mes. The greedy rule performs worst among all rules. Finally, for Dataset 5, both greedy approval and \mes give the same output which turns out to be better than the output of Phragm\'{e}n-Exh. Overall, \mes-Exh and Phragm\'{e}n-Exh perform well in the experiments, with Phragm\'{e}n performing noticeably well even in the absence of exhaustion. There are also only two out of the 100 datasets for which greedy approval is the single best rule. For both these datasets, \mes and Phragm\'{e}n have the same output and they are only marginally outperformed by the greedy rule. 

Next we present aggregate results over all the datasets. \Cref{fig: results_wrt} shows how the average proportionality degree of different datasets changes with respect to greedy rule and the \mes rule respectively. As shown on the left side, both the rules are significantly better than the greedy rule on most datasets, while Phragm\'{e}n-Exh performs much better than \mes-Exh. Notably, even without exhaustion, the behavior of the different rules is similar. As can be seen on the right side of \Cref{fig: results_wrt}, all of \mes-Exh, Phragm\'{e}n, and Phragm\'{e}n-Exh outperform \mes on most datasets. It is worth highlighting that Phragm\'{e}n performs notably better than \mes-Exh on several datasets, see also the results summarized in \Cref{fig: chart}.

\section{Conclusion and Future Work}
In this paper, we studied the proportionality degree of fundamental rules for approval-based participatory budgeting, focusing on the Method of Equal Shares and Phragm\'{e}n's Sequential Rule, both from a theoretical and from an experimental perspective. The results in our work provide further evidence for the fairness guarantees of these rules, and additional support for their employment in practice over the greedy rule.  

Interestingly, we managed to prove that both rules achieve the same proportionality degree theoretically, albeit with markedly different proof techniques. This leads us to conjecture that in fact both of these rules are optimal for the problem, in the sense that there is no other rule that achieves a better proportionality degree. To this end, below we state a theorem (with a proof in \cref{app:conclusion}) which essentially upper bounds the proportionality degree of any rule by $\frac{\cof{T}}{\max_{t \in T}\cof{t}} - 1$; proving an exactly tight bound (therefore establishing the optimality of the two rules) seems like an interesting technical challenge. 

\begin{theorem}\label{thm:conclusion}
    There exists a PB instance $I$, a set of projects $T \subseteq \proj$ and a $T$-cohesive group $V \subseteq \mathcal{V}(T)$ such that the average satisfaction of the voters in $T$ from the outcome of any PB rule is at most $\frac{\cof{T}}{\max_{t \in T}\cof{t}} - 1$.
\end{theorem}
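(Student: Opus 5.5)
The plan is to build a single ``tight'' instance in which the budget cannot serve every cohesive group, and then use an averaging (pigeonhole) argument that holds for \emph{every} feasible outcome $W$. This avoids analysing any particular rule. I would use unit-cost projects, or costs within a factor $1+\delta$ of each other as in the earlier upper-bound proofs, so that $\frac{\cof{T}}{\max_{t\in T}\cof{t}}$ is essentially $|T| =: \ell$. The target is then to exhibit a $T$-cohesive group whose average satisfaction is at most $\ell-1$, whatever $W$ is.

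The intended construction is cyclic. Place the $n$ voters on a cycle $\mathbb{Z}_n$ and, for each position $j$, introduce a project $p_j$ approved by a contiguous arc of voters starting at $j$, of a carefully chosen length $L$. Any $\ell$ consecutive projects $p_j,\ldots,p_{j+\ell-1}$ are then unanimously approved by an arc of length $L-(\ell-1)$. Choosing $L$ so that this arc has size exactly $\lceil n\ell/\bud\rceil$ makes that arc $T$-cohesive for $T=\{p_j,\ldots,p_{j+\ell-1}\}$, by \cref{def: tcohesive}.

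The averaging step runs as follows. For any feasible $W$, the total satisfaction is $\sum_{i\in N}|A_i\cap W| = \sum_{p\in W}|N_p| \le \bud\cdot L$ in the unit-cost case. Partition the cycle into disjoint cohesive arcs of the above form. The average satisfaction over all voters is at most $\bud L/n$, so some arc in the partition has average satisfaction at most $\bud L/n$, and this arc is a $T$-cohesive group for the corresponding $T$. It then remains to choose $n$, $\bud$, $L$ and the spacing of the projects so that $\bud L/n \le \ell-1$, and finally to let $\delta\to0$ in the costs.

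The main obstacle is exactly this last calibration. With arcs shifted by one voter, the computation gives $\bud L/n \approx \ell + \bud(\ell-1)/n$, which is slightly \emph{above} $\ell$ rather than below $\ell-1$. Pure counting of supporters therefore loses about one unit, and the construction must be sharpened. The first thing I would try is to exploit the cost ratio in the statement. Let $T$ contain one project of cost $\max_{t\in T}\cof{t}$, with the remaining projects cheaper, so that $\frac{\cof{T}}{\max_{t\in T}\cof{t}}$ is strictly smaller than $|T|$. Then make the cheap projects approved only by small sub-arcs, so that funding them contributes little to the total $\sum_{p\in W}|N_p|$ per unit of budget. Equivalently, I would weight the bound $\sum_{p\in W}|N_p|$ by $\cof{p}$ and use $\sum_{p\in W}\cof{p}\le\bud$ in place of counting projects. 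If that fails, the fallback is a design-type (rather than cyclic) family of overlapping groups, in which every project's support is smaller than a cohesive group, so that the per-project support $s$ satisfies $\bud s/n \le \ell-1$ while each group still has $\ell$ commonly approved projects.
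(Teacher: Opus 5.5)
\textbf{There is a genuine gap in the core step, not just in the calibration.} Averaging satisfactions cannot work here, and your arc instance does not witness the bound at all.

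With unit costs, every $T$-cohesive group with $|T|=\ell$ has at least $n\ell/\bud$ voters. So a partition into disjoint cohesive arcs has at most $\bud/\ell$ parts, and the budget can buy the $\ell$ common projects of every part. Averaging over all shifts $j$ instead just reproduces the global average, because each voter lies in the same number of arcs.

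Worse, suppose $W$ takes every $(n/\bud)$-th project start. Then every voter is covered by at least $\lfloor \bud L/n\rfloor\ge\ell$ funded projects, so every cohesive arc is fully served; changing the spacing does not help. The reason is that a funded $p_j$ counts for $\ell$ different sets $T$ at once.

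Both of your suggested repairs contradict the definition of $T$-cohesiveness, which requires every voter of $V$ to approve every project of $T$.
\begin{itemize}
\item Cheap projects of $T$ cannot be confined to sub-arcs. They are approved by the whole group, so a rule can buy them almost for free. Pushing $\cof{T}/\max_{t\in T}\cof{t}$ below $|T|$ only lowers the target you must reach.
\item Every project of $T$ has support at least $|V|\ge n\cof{T}/\bud$. So a per-project support $s$ with $\bud s/n\le\ell-1$ is impossible in the unit-cost case.
\end{itemize}

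\textbf{The missing idea} is the one the paper takes from the multiwinner instance of \citet{aziz2018complexity}. Give the groups pairwise disjoint project sets and almost disjoint voter sets, but use more groups than the budget can serve. Then apply pigeonhole to the number of funded projects per group, not to satisfactions.

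Take groups $V_0,\dots,V_k$ of size $k+1$ arranged in a cycle, with consecutive groups sharing exactly one voter, so $n=k(k+1)$. Give $V_g$ its own $\ell$ unit-cost projects $T_g$ and set $\bud=k\ell$. Then $|V_g|/n=1/k=\cof{T_g}/\bud$, so $V_g$ is $T_g$-cohesive.

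Any feasible $W$ has at most $k\ell<(k+1)\ell$ projects, so some $g$ has $|W\cap T_g|\le\ell-1$. The $k-1$ private voters of $V_g$ then have satisfaction at most $\ell-1$, and the two shared voters at most $2\ell-1$. Hence
\[
\mathrm{avg}_W(V_g)\le \ell-1+\frac{2\ell}{k+1}\xrightarrow{k\to\infty}\ell-1 .
\]

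Averaging would fail even on this instance. For any $W$ with $k\ell$ projects, the mean of $\mathrm{avg}_W(V_g)$ over $g$ is $\frac{k(k+3)}{(k+1)^2}\,\ell>\ell$ for $k\ge 2$; only the minimum drops to about $\ell-1$.

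Two further points about how the statement must be read, as in the paper's proof. The bound is attained only in the limit. The bad group $V_g$ must depend on the outcome $W$, since for a fixed $V$ a rule could simply fund $T$.
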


\bibliographystyle{apalike}
\bibliography{paper.bib}
\clearpage
\appendix
\section*{Appendix}
\section{The proof of \cref{thm:conclusion}}\label{app:conclusion}
\begin{figure}[h!]
    \centering
    \includegraphics[width=0.3\linewidth]{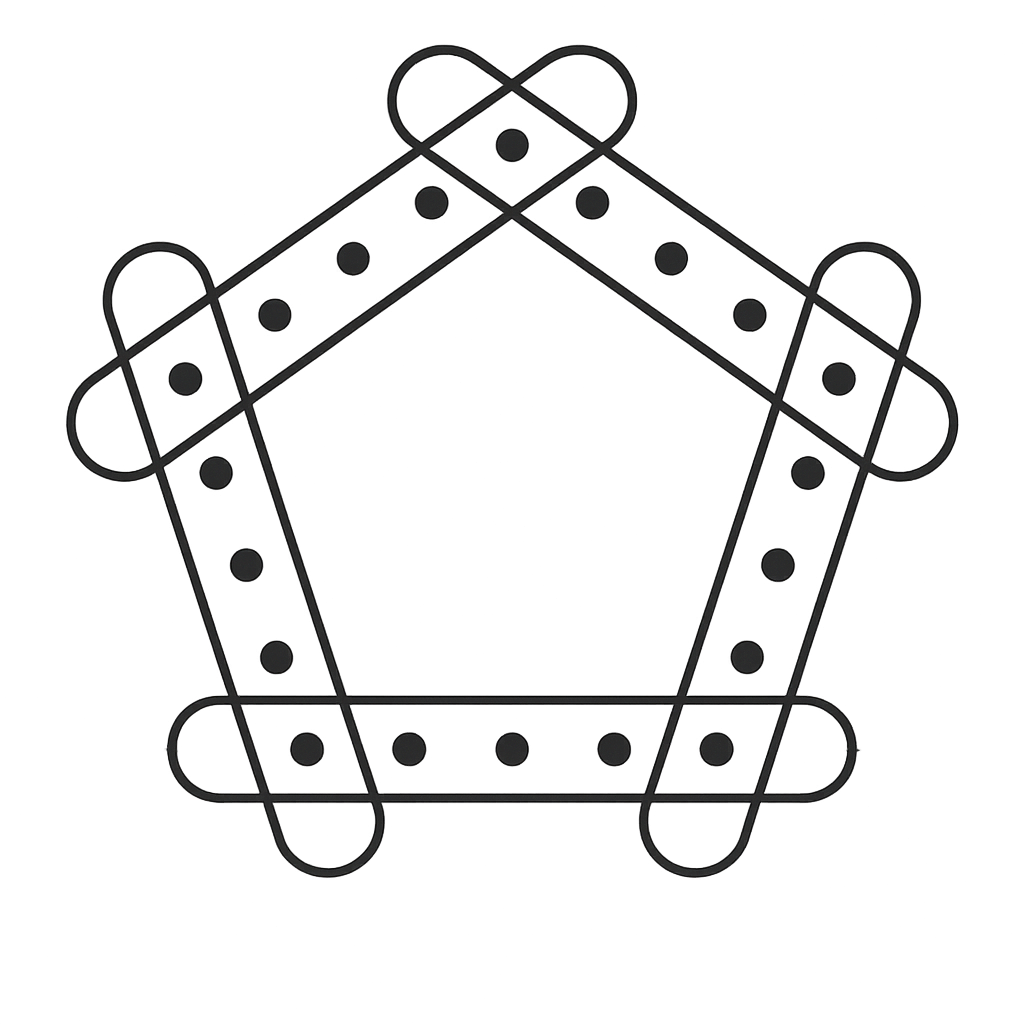}
    \caption{An example instance where $\frac{\bud}{\cof{t_{\min}}} + 1 = 5$. Voters are represented by dots and the boxes represent projects approved by the voters that they contain.}
    \label{fig: aziz-et.al-lower-bound}
\end{figure}
\begin{proof}[Proof of \cref{thm:conclusion}]
    We will consider essentially the same instance as the one presented in \citep{aziz2018complexity} for the case of MWV, with the approval preferences of the voters shown in \Cref{fig: aziz-et.al-lower-bound}. To extend this to the case of PB, we will set the budget $B$ to be divisible by $\min_{t \in T} t$. Let $t_{\min} = \argminaa_{t \in T} \cof{t}$. 
    We will also have $$n = \frac{\bud}{\cof{t_{\min}}}\cdot \lb\frac{\bud}{\cof{t_{\min}}} + 1\rb$$ 
    voters and $\frac{\bud}{\cof{t_{\min}}} + 1$ projects, all of them with cost equal to $\cof{t_{\min}}$. Each voters belongs to at most two groups, depending on the projects that she approves. Each group has size $$|V| = \frac{\bud}{\cof{t_{\min}}} + 1 = \cof{t_{\min}}\cdot \frac{n}{\bud},$$ and thus each of these groups is $(t_{\min})$-cohesive.

    Let $f$ be any PB rule on input instance $I$. Without violating the budget constraints, $f$ can add at most $B/\cof{t_{\min}}$ projects to the proposal. This implies that there exists some $(t_{\min})$-cohesive group for which the average satisfaction will be $$\frac{2}{\lb \frac{\bud}{\cof{t_{\min}}}\rb + 1}$$. 

    Now we extend the construction to the whole set of projects $T$, by replacing project $t_{\min}$ with $|T|$ projects that are approved by the same group of voters (i.e., by any voter $i$ such that $t_{\min} \in A_i)$ and have higher cost than the initial project $t_{\min}$. Next, we set the new budget $\bud^{*}= B \cdot \frac{\cof{T}}{\min_{t \in T}\cof{t}}$, where $\cof{T}$ is divisible by $\min_{t \in T} \cof{t}$. Observe that
    \[
    |V| = \cof{t_{\min}} \cdot \frac{n}{\bud} = \frac{\cof{T}}{\min_{t \in T}\cof{t}} \cdot \cof{t_{\min}} \cdot \frac{n}{\bud^{*}} = \cof{T}\cdot \frac{n}{\bud^{*}}.
    \]
    From the above, it follows that each such group $V$ is $T$-cohesive. Similarly, as in \citep{aziz2018complexity} by selecting projects in a cyclic manner, we can find a set of projects $T$ such that at most $\frac{\cof{T}}{\min_{t \in T}\cof{t}} - 1$ projects are included in the proposal. This is because, it the best case, all of the projects would have cost equal to $\cof{t_{\min}}$, and then there would be at least one group of voters $V'$ such that $\bigcup_{i \in V'}A_i \cap W < \frac{\cof{T}}{\min_{t \in T}\cof{t}}$, i.e., at most $\frac{\cof{T}}{\min_{t \in T}\cof{t}}$ of the projects that they approve are selected in the proposal. Then, the bound still holds for the projects in $T$ with cost higher than $\min_{t \in T}\cof{t}$.

    From the above, it follows that there is a group of voters $V$ such that $\cof{T}\cdot \frac{n}{\bud´} - 2$ voters in $V$ approve at most $\frac{\cof{T}}{\min_{t \in T}\cof{t}} - 1$ projects of the proposal, and the remaining two voters approve at most $\frac{2\cof{T}}{\min_{t \in T}\cof{t}} - 1$ projects of the proposal. Thus, the average satisfaction of the voters in $V$ is at most:

    \begin{small}
    \begin{align*}
        \frac{2\cdot\lb\frac{2\cof{T}}{\min_{t \in T}\cof{t}} - 1\rb + \lb\cof{T}\cdot \frac{n}{\bud´} - 2\rb \cdot \lb\frac{\cof{T}}{\min_{t \in T}\cof{t}} - 1\rb}{\cof{T}\cdot \frac{n}{\bud´}} &= \frac{\frac{2\cof{T}}{\min_{t \in T}\cof{t}}}{\cof{T}\cdot\frac{n}{\bud´}} + \frac{\cof{T}}{\min_{t \in T}\cof{t}} - 1 \\
        &= \frac{2\bud}{n \min_{t \in T}\cof{t}} + \frac{\cof{T}}{\min_{t \in T}\cof{t}} - 1
    \end{align*}
    \end{small}

    Recall that $$n = \frac{\bud}{\min_{t \in T} \cof{t}}\cdot \lb\frac{\bud}{\min_{t \in T} \cof{t}} + 1\rb.$$
    Therefore, for $\bud > \frac{2\min_{t \in T}\cof{t}}{\gamma}$, the average satisfaction of the voters in $V$ is at most than $\frac{\cof{T}}{\min_{t \in T}} - 1 + \gamma$. Then, we may choose $\min_{t \in T}\cof{t}$ such that $\frac{\max_{t \in T}\cof{t}}{\min_{t\in T}\cof{t}} \leq 1 + \delta$ with $\delta \rightarrow 0^{+}$ and we obtain the desired bound.
     
\end{proof}

\end{document}

%% file: figure2.tex
\tikzset{every picture/.style={line width=0.75pt}}

\begin{tikzpicture}[x=0.75pt,y=0.75pt,yscale=-1,xscale=1]


\draw (20,150) node [anchor=west] [align=left, font=\large] {{Round $\pi _{\ell _{1}} :$}};

\draw [rounded corners=5pt] (180, 130) rectangle (290, 170);
\draw (235, 125) node [anchor=south] {$V_{\ell _{1}}$};
\draw (235, 150) node [anchor=center] [font=\small] (idx1) {$0,...,\textcolor[rgb]{0,0.46,1}{i}\textcolor[rgb]{0.56,0.07,1}{\ } ,...,k$};
\draw (235, 175) node [anchor=north] [font=\small] {$any$};

\draw [rounded corners=5pt] (320, 130) rectangle (520, 170);
\draw (420, 125) node [anchor=south] {$V_{\ell _{2}}$};
\draw (420, 150) node [anchor=center] [font=\small] (idx2) {$k+1,...,\textcolor[rgb]{0,0.46,1}{j} \ ,...,n_{V} -1$};
\draw (420, 175) node [anchor=north] [font=\small] {$any$};

\draw (190, 210) node [anchor=north] [font=\large, color={rgb, 255:red, 0; green, 0; blue, 0 }] (eq1) {$p( i) \leq \frac{\max_{t\in T}\text{cost}( t)}{n_{V}} \leq \frac{\max_{t\in T}\text{cost}( t)}{n_{V} -i}$};

\draw (480, 210) node [anchor=north] [font=\large] (eq2) {$p( j) \leq \frac{\max_{t\in T}\text{cost}( t)}{n_{V} -|V_{\ell _{1}} |} \leq \frac{\max_{t\in T}\text{cost}( t)}{n_{V} -j}$};

\draw [->, thick, color={rgb, 255:red, 0; green, 117; blue, 255 }, shorten <= 2pt, shorten >= 5pt] (idx1.south) -- (eq1.north);
\draw [->, thick, color={rgb, 255:red, 0; green, 117; blue, 255 }, shorten <= 2pt, shorten >= 5pt] (idx2.south) -- (eq2.north);

\draw [color={rgb, 255:red, 0; green, 0; blue, 0 }] (20, 255) -- (620, 255);


\draw (20,325) node [anchor=west] [align=left, font=\large] {{Round $\pi _{\ell _{k}} :$}};


\draw [rounded corners=5pt] (120, 305) rectangle (160, 345);
\draw (140, 300) node [anchor=south] {$V_{\ell _{1}}$};
\draw (140, 350) node [anchor=north] [font=\small] {$any$};

\draw [rounded corners=5pt] (170, 305) rectangle (210, 345);
\draw (190, 300) node [anchor=south] {$V_{\ell _{2}}$};
\draw (190, 350) node [anchor=north] [font=\small] {$any$};

\draw (230, 325) node [font=\Huge] {...};

\draw [rounded corners=5pt] (250, 305) rectangle (290, 345);
\draw (270, 300) node [anchor=south] {$V_{\ell _{k-1}}$};
\draw (270, 350) node [anchor=north] [font=\small] {$any$};

\draw [rounded corners=5pt] (310, 305) rectangle (410, 345);
\draw (360, 300) node [anchor=south] {$V_{\ell _{k}}$};
\draw (360, 325) node [anchor=center] [font=\small] (idx3) {$k',...,\textcolor[rgb]{0,0.46,1}{i} \ ,...,\lambda$};
\draw (360, 350) node [anchor=north] [font=\small] {$any$};

\draw [rounded corners=5pt] (430, 305) rectangle (570, 345);
\draw (500, 300) node [anchor=south] {$V \setminus \{V_{\ell _{k}} \cup V_{\overline{T}}\}$};
\draw (500, 325) node [anchor=center] [font=\small] (idx4) {$\lambda+1,...,\textcolor[rgb]{0,0.46,1}{j} \ ,...,n_{V} -1$};
\draw (500, 350) node [anchor=north] [font=\small] {$any$};

\draw (120, 375) .. controls (120, 380) and (205, 380) .. (205, 385) -- (205, 385) .. controls (205, 380) and (290, 380) .. (290, 375);
\draw (205, 390) node [anchor=north] {$V_{\overline{T}}$};

\draw (220, 425) node [anchor=north] [font=\large,color={rgb, 255:red, 0; green, 0; blue, 0 }] (eq3) {$p( i) \leq \frac{\max_{t\in T}\text{cost}( t)}{n_{V} -|V_{\overline{T}} |} \leq \frac{\max_{t\in T}\text{cost}( t)}{n_{V} -i}$};

\draw (520, 425) node [anchor=north] [font=\large] (eq4) {$p( j) \leq \frac{\max_{t\in T}\text{cost}( t)}{n_{V} -|V_{\overline{T}} |-|V_{\ell _{k}} |} \leq \frac{\max_{t\in T}\text{cost}( t)}{n_{V} -j}$};

\draw [->, thick, color={rgb, 255:red, 0; green, 117; blue, 255 }, shorten <= 2pt, shorten >= 8pt] (idx3.south) -- (eq3.north);

\draw [->, thick, color={rgb, 255:red, 0; green, 117; blue, 255 }, shorten <= 2pt, shorten >= 8pt] (idx4.south) -- (eq4.north);

\end{tikzpicture}

%% file: figure1.tex
\tikzset{
    every picture/.style={line width=1.5pt},
    dot/.style={circle, fill=black, inner sep=0pt, minimum size=6pt},
    bluebox/.style={draw=blue!80, line width=1.0pt, rounded corners=6pt},
    label text/.style={font=\large\bfseries} 
}

\begin{tikzpicture}[x=1.4cm, y=0.9cm]

    \draw[bluebox,line width=2pt] (-0.6, -0.6) rectangle (8.5, 0.6);
    \node[label text, blue, anchor=east] at (-0.7, 0) {$T$};

    
    \draw[bluebox] (-0.3, -3.5) rectangle (0.3, 4.5);
    \foreach \y in {-3, -2, -1, 0, 1, 2, 3, 4} { \node[dot] at (0, \y) {}; }
    \node[label text, black] at (0, 4.9) {$Q_0$};
    \node[label text, blue] at (0, -4.0) {$Z_0$};

    \draw[bluebox] (0.7, -3.5) rectangle (1.3, 3.5);
    \foreach \y in {-3, -2, -1, 0, 1, 2, 3} { \node[dot] at (1, \y) {}; }
    \node[label text, black] at (1, 3.9) {$Q_1$};
    \node[label text, blue] at (1, -4.0) {$Z_1$};

    \draw[bluebox] (1.7, -3.5) rectangle (2.3, 2.5);
    \foreach \y in {-3, -2, -1, 0, 1, 2} { \node[dot] at (2, \y) {}; }
    \node[label text, black] at (2, 2.9) {$Q_2$};
    \node[label text, blue] at (2, -4.0) {$Z_2$};

    \node[font=\Huge] at (3.5, 0) {$\dots$};

    
    \draw[bluebox] (4.7, -0.6) rectangle (5.3, 3.5);
    \foreach \y in {0, 1, 2, 3} { \node[dot] at (5, \y) {}; }
    \node[label text, black] at (5, 3.9) {$Q_{n_V-4}$};
    \node[label text, blue] at (5, -1.2) {$Z_{n_V-4}$};

    \draw[bluebox] (5.7, -0.6) rectangle (6.3, 2.5);
    \foreach \y in {0, 1, 2} { \node[dot] at (6, \y) {}; }
    \node[label text, black] at (6, 2.9) {$Q_{n_V-3}$};
    \node[label text, blue] at (6, -1.2) {$Z_{n_V-3}$};

    \draw[bluebox] (6.7, -0.6) rectangle (7.3, 1.5);
    \foreach \y in {0, 1} { \node[dot] at (7, \y) {}; }
    \node[label text, black] at (7, 1.9) {$Q_{n_V-2}$};
    \node[label text, blue] at (7, -1.2) {$Z_{n_V-2}$};

    \node[dot] at (8, 0) {};
    \node[label text, black] at (8.3, 0.3) {$V$};

\end{tikzpicture}

%% file: paper.bib
@article{maly2023core,
  title={The core of an approval-based PB instance can be empty for nearly all cost-based satisfaction functions and for the share},
  author={Maly, Jan},
  journal={arXiv preprint arXiv:2311.06132},
  year={2023}
}

@inproceedings{sreedurga2023individual,
  title={Individual-Fair and Group-Fair Social Choice Rules under Single-Peaked Preferences},
  author={Sreedurga, Gogulapati and Sadhukhan, Soumyarup and Roy, Souvik and Narahari, Yadati},
  booktitle={Proceedings of the 2023 International Conference on Autonomous Agents and Multiagent Systems},
  pages={2872--2874},
  year={2023}
}

@inproceedings{sreedurga2024hybrid,
  title={Hybrid Participatory Budgeting: Divisible, Indivisible, and Beyond},
  author={Sreedurga, Gogulapati},
  booktitle={Proceedings of the 23rd International Conference on Autonomous Agents and Multiagent Systems},
  pages={2480--2482},
  year={2024}
}

@article{bogomolnaia2005collective,
  title={Collective choice under dichotomous preferences},
  author={Bogomolnaia, Anna and Moulin, Herv{\'e} and Stong, Richard},
  journal={Journal of Economic Theory},
  volume={122},
  number={2},
  pages={165--184},
  year={2005},
  publisher={Elsevier}
}

@article{airiau2023portioning,
  title={Portioning using ordinal preferences: Fairness and efficiency},
  author={Airiau, St{\'e}phane and Aziz, Haris and Caragiannis, Ioannis and Kruger, Justin and Lang, J{\'e}r{\^o}me and Peters, Dominik},
  journal={Artificial Intelligence},
  volume={314},
  pages={103809},
  year={2023},
  publisher={Elsevier}
}

@book{wampler2021participatory,
  title={Participatory budgeting in global perspective},
  author={Wampler, Brian and McNulty, Stephanie and McNulty, Stephanie L and Touchton, Michael},
  year={2021},
  publisher={Oxford University Press}
}

@book{lackner2023multi,
  title={Multi-winner voting with approval preferences},
  author={Lackner, Martin and Skowron, Piotr},
  year={2023},
  publisher={Springer Nature}
}

@inproceedings{fain2018fair,
  title={Fair allocation of indivisible public goods},
  author={Fain, Brandon and Munagala, Kamesh and Shah, Nisarg},
  booktitle={Proceedings of the 2018 ACM Conference on Economics and Computation},
  pages={575--592},
  year={2018}
}

@inproceedings{kraiczy2024lower,
  title={A Lower Bound for Local Search Proportional Approval Voting},
  author={Kraiczy, Sonja and Elkind, Edith},
  booktitle={ESA},
  year={2024}
}

@inproceedings{peters2020proportionality,
  title={Proportionality and the limits of welfarism},
  author={Peters, Dominik and Skowron, Piotr},
  booktitle={Proceedings of the 21st ACM Conference on Economics and Computation},
  pages={793--794},
  year={2020}
}

@article{brill2024phragmen,
  title={Phragm{\'e}n’s voting methods and justified representation},
  author={Brill, Markus and Freeman, Rupert and Janson, Svante and Lackner, Martin},
  journal={Mathematical programming},
  volume={203},
  number={1},
  pages={47--76},
  year={2024},
  publisher={Springer}
}

@article{de2022international,
  title={International trends in participatory budgeting},
  author={De Vries, Michiel S and Nemec, Juraj and {\v{S}}pa{\v{c}}ek, David},
  journal={Cham: Palgrave Macmillan},
  year={2022},
  publisher={Springer}
}

@article{thiele1895om,
  title={Om flerfoldsvalg},
  author={Thiele, Thorvald N},
  journal={Oversigt over det Kongelige Danske Videnskabernes Selskabs Forhandlinger},
  volume={1895},
  pages={415--441},
  year={1895}
}

@inproceedings{aziz2018rank,
  title={Rank maximal equal contribution: A probabilistic social choice function},
  author={Aziz, Haris and Luo, Pang and Rizkallah, Christine},
  booktitle={Proceedings of the 32nd AAAI Conference on Artificial Intelligence (AAAI 2018)},
  pages ={910--916},
  year={2018}
}

@inproceedings{faliszewski2023participatory,
  title={Participatory budgeting: data, tools, and analysis},
  author={Faliszewski, Piotr and Flis, Jaros{\l}aw and Peters, Dominik and Pierczy{\'n}ski, Grzegorz and Skowron, Piotr and Stolicki, Dariusz and Szufa, Stanis{\l}aw and Talmon, Nimrod},
  booktitle={Proceedings of the Thirty-Second International Joint Conference on Artificial Intelligence},
  pages={2667--2674},
  year={2023}
}

@inproceedings{aziz2018proportionally,
author = {Aziz, Haris and Lee, Barton E. and Talmon, Nimrod},
title = {Proportionally Representative Participatory Budgeting: Axioms and Algorithms},
booktitle = {Proceedings of the 17th International Conference on Autonomous Agents and MultiAgent Systems},
pages = {23–31},
year = {2018}
}

@inproceedings{fairstein2022welfare,
author = {Fairstein, Roy and Vilenchik, Dan and Meir, Reshef and Gal, Kobi},
title = {Welfare vs. Representation in Participatory Budgeting},
booktitle = {Proceedings of the 21st International Conference on Autonomous Agents and Multiagent Systems},
pages = {409–417},
year = {2022}
}

@article{skowron2020participatory,
  title={Participatory budgeting with cumulative votes},
  author={Skowron, Piotr and Slinko, Arkadii and Szufa, Stanis{\l}aw and Talmon, Nimrod},
  journal={arXiv preprint arXiv:2009.02690},
  year={2020}
}

@inproceedings{sanchez2017proportional,
  title={Proportional justified representation},
  author={S{\'a}nchez-Fern{\'a}ndez, Luis and Elkind, Edith and Lackner, Martin and Fern{\'a}ndez, Norberto and Fisteus, Jes{\'u}s A and Val, Pablo Basanta and Skowron, Piotr},
  booktitle={AAAI 2017},
  year={2017}
}

@article{aziz2017justified,
  title={Justified representation in approval-based committee voting},
  author={Aziz, Haris and Brill, Markus and Conitzer, Vincent and Elkind, Edith and Freeman, Rupert and Walsh, Toby},
  journal={Social Choice and Welfare},
  volume={48},
  number={2},
  pages={461--485},
  year={2017},
  publisher={Springer}
}

@inproceedings{aziz2018complexity,
  title={On the Complexity of Extended and Proportional Justified Representation},
  author={Aziz, Haris and Elkind, Edith and Huang, Shenwei and Lackner, Martin and Fern{\'a}ndez, Luis S{\'a}nchez and Skowron, Piotr},
  booktitle={AAAI 2018},
  pages={902--909},
  year={2018}
}

@inproceedings{fluschnik2019fair,
  title={Fair knapsack},
  author={Fluschnik, Till and Skowron, Piotr and Triphaus, Mervin and Wilker, Kai},
  booktitle={Proceedings of the 33rd AAAI Conference on Artificial Intelligence (AAAI 2019)},
  pages={1941--1948},
  year={2019}
}

@article{phragmen1894methode,
  title={Sur une m{\'e}thode nouvelle pour r{\'e}aliser, dans les {\'e}lections, la repr{\'e}sentation proportionnelle des partis},
  author={Phragm{\'e}n, Edvard},
  journal={{\"O}fversigt af Kongliga Vetenskaps-Akademiens F{\"o}rhandlingar},
  volume={51},
  number={3},
  pages={133--137},
  year={1894}
}

@article{laruelle2021voting,
  title={Voting to select projects in participatory budgeting},
  author={Laruelle, Annick},
  journal={European Journal of Operational Research},
  volume={288},
  number={2},
  pages={598--604},
  year={2021}
}

@misc{benade2018efficiency,
  title={Efficiency and usability of participatory budgeting methods},
  author={Benade, Gerdus and Itzhak, Nevo and Shah, Nisarg and Procaccia, Ariel D},
  howpublished = {\url{https://www.cs.toronto.edu/~nisarg/papers/pb_usability.pdf}},
  year={2018}
}

@inproceedings{jain2020participatoryg,
  title={Participatory Budgeting with Project Groups},
  author={Jain, Pallavi and Sornat, Krzysztof and Talmon, Nimrod and Zehavi, Meirav},
  booktitle={Proceedings of the 30th International Joint Conference on Artificial Intelligence (IJCAI 2021)},
  pages={276-282},
  year={2021}
}

@article{freeman2021truthful,
  title={Truthful aggregation of budget proposals},
  author={Freeman, Rupert and Pennock, David M and Peters, Dominik and Vaughan, Jennifer Wortman},
  journal={Journal of Economic Theory},
  volume={193},
  pages={105234},
  year={2021},
  publisher={Elsevier}
}

@article{peters2021proportional,
  title={Proportional participatory budgeting with additive utilities},
  author={Peters, Dominik and Pierczy{\'n}ski, Grzegorz and Skowron, Piotr},
  journal={Advances in Neural Information Processing Systems},
  volume={34},
  pages={12726--12737},
  year={2021}
}

@inproceedings{talmon2019framework,
  title={A framework for approval-based budgeting methods},
  author={Talmon, Nimrod and Faliszewski, Piotr},
  booktitle={Proceedings of the 33rd AAAI Conference on Artificial Intelligence (AAAI 2019)},
  pages={2181--2188},
  year={2019}
}

@article{benade2021preference,
  title={Preference elicitation for participatory budgeting},
  author={Benade, Gerdus and Nath, Swaprava and Procaccia, Ariel D and Shah, Nisarg},
  journal={Management Science},
  volume={67},
  number={5},
  pages={2813--2827},
  year={2021},
  publisher={INFORMS}
}

@article{sreedurga2022indivisible,
  title={Indivisible participatory budgeting under weak rankings},
  author={Sreedurga, Gogulapati and Narahari, Yadati},
  journal={arXiv preprint arXiv:2207.07981},
  year={2022}
}

@inproceedings{aziz2021proportionally,
  title={Proportionally Representative Participatory Budgeting with Ordinal Preferences},
  author={Aziz, Haris and Lee, Barton E},
  booktitle={Proceedings of the 35th AAAI Conference on Artificial Intelligence (AAAI 2021)},
  pages={5110--5118},
  year={2021}
}

@article{duddy2015fair,
  title={Fair sharing under dichotomous preferences},
  author={Duddy, Conal},
  journal={Mathematical Social Sciences},
  volume={73},
  pages={1--5},
  year={2015},
  publisher={Elsevier}
}

@String{Macmillan = "Macmillan" }

@String{Springer = "Springer-Verlag" }

@inproceedings{aziz2019fair,
  title={Fair mixing: the case of dichotomous preferences},
  author={Aziz, Haris and Bogomolnaia, Anna and Moulin, Herv{\'e}},
  booktitle={Proceedings of the 2019 ACM Conference on Economics and Computation (EC 2019)},
  pages={753--781},
  year={2019}
}

@inproceedings{fain2016core,
  title={The core of the participatory budgeting problem},
  author={Fain, Brandon and Goel, Ashish and Munagala, Kamesh},
  booktitle={Proceedings of the 12th International Conference on Web and Internet Economics (WINE 2016)},
  pages={384--399},
  year={2016}
}

@article{goel2019knapsack,
  title={Knapsack voting for participatory budgeting},
  author={Goel, Ashish and Krishnaswamy, Anilesh K and Sakshuwong, Sukolsak and Aitamurto, Tanja},
  journal={ACM Transactions on Economics and Computation (TEAC)},
  volume={7},
  number={2},
  pages={1--27},
  year={2019},
  publisher={ACM New York, NY, USA}
}

@incollection{aziz2021participatory,
  title={Participatory budgeting: Models and approaches},
  author={Aziz, Haris and Shah, Nisarg},
  booktitle={Pathways Between Social Science and Computational Social Science},
  pages={215--236},
  year={2021},
  publisher={Springer}
}

@inproceedings{rey2020designing,
  title={Designing participatory budgeting mechanisms grounded in judgment aggregation},
  author={Rey, Simon and Endriss, Ulle and de Haan, Ronald},
  booktitle={Proceedings of the 17th International Conference on Principles of Knowledge Representation and Reasoning (KR 2020)},
  pages={692--702},
  year={2020}
}

@article{fairstein2021proportional,
  title={Proportional Participatory Budgeting with Substitute Projects},
  author={Fairstein, Roy and Meir, Reshef and Gal, Kobi},
  journal={arXiv preprint arXiv:2106.05360},
  year={2021}
}

@inproceedings{sreedurga2022maxmin,
  title={Maxmin Participatory Budgeting},
  author={Sreedurga, Gogulapati and Bhardwaj, Mayank Ratan and Narahari, Y},
  booktitle={Proceedings of the 31st International Joint Conference on Artificial Intelligence (IJCAI 2022)},
  pages={489--495},
  year={2022}
}

@inproceedings{baumeister2020irresolute,
  title={Irresolute approval-based budgeting},
  author={Baumeister, Dorothea and Boes, Linus and Seeger, Tessa},
  booktitle={Proceedings of the 19th International Conference on Autonomous Agents and MultiAgent Systems (AAMAS 2020)},
  pages={1774--1776},
  year={2020}
}

@inproceedings{brill2023proportionality,
  title={Proportionality in Approval-Based Participatory Budgeting},
  author={Brill, Markus and Forster, Stefan and Lackner, Martin and Maly, Jan and Peters, Jannik},
  booktitle={Proceedings of the 37th AAAI Conference on Artificial Intelligence (AAAI 2023)},
  pages={5524--5531},
  year={2023}
}

@inproceedings{los2022proportional,
  title={Proportional budget allocations: Towards a systematization},
  author={Los, Maaike and Christoff, Zo{\'e} and Grossi, Davide},
  booktitle={Proceedings of the 31st International Joint Conference on Artificial Intelligence (IJCAI 2022)},
  pages={398--404},
  year={2022}
}

@inproceedings{skowron2021proportionality,
  title={Proportionality degree of multiwinner rules},
  author={Skowron, Piotr},
  booktitle={Proceedings of the 22nd ACM Conference on Economics and Computation},
  pages={820--840},
  year={2021}
}

@incollection{kilgour2010approval,
  title={Approval balloting for multi-winner elections},
  author={Kilgour, D Marc},
  booktitle={Handbook on approval voting},
  pages={105--124},
  year={2010},
  publisher={Springer}
}
